\newtheorem{theorem}{Theorem}[section]
\newtheorem{lemma}{Lemma}[section]
\newenvironment{proof}[1][Proof]{\noindent\textbf{#1.} }{\ \rule{0.5em}{0.5em}}
\begin{document}

\title{A Broad and General Sequential Sampling Scheme} \author{ Jun Hu\footnote{Jun Hu is an Assistant Professor in the Department of Mathematics and Statistics, Oakland University, 146 Library Drive, Rochester, MI 48309, USA. Tel.:~248-370-3434. Email address: junhu@oakland.edu.} \; and Yan Zhuang\footnote{Yan Zhuang is an Assistant Professor in the Mathematics and Statistics Department, Connecticut College. Email address: yzhuang@conncoll.edu.} \; }

\date{}

\maketitle

\bigskip

\begin{abstract}

In this paper, we propose a broad and general sequential sampling scheme, which incorporates four different types of sampling procedures: i) the classic Anscombe-Chow-Robbins purely sequential sampling procedure; ii) the ordinary accelerated sequential sampling procedure; iii) the relatively new $k$-at-a-time purely sequential sampling procedure; iv) the new $k$-at-a-time improved accelerated sequential sampling procedure. The first-order and second-order properties of this general sequential sampling scheme are fully investigated with two illustrations on minimum risk point estimation for the mean of a normal distribution and on bounded variance point estimation for the location parameter of a negative exponential distribution, respectively. We also provide extensive computational simulation studies and real data analyses for each illustration.

\bigskip

\noindent \emph{Keywords}: Sequential sampling; Improved accelerated sequential sampling; Minimum risk point estimation; Bounded variance point estimation; Real data analyses.

\bigskip

\noindent \textbf{Mathematics Subject Classifications} \ 62L12; 62L05; 62L10 \ 
\end{abstract}

\bigskip


\setcounter{section}{0}
\setcounter{equation}{0}
\section{Introduction}\label{Sect. 1}

In statistical inference problems where there exists no
fixed-sample-size procedure, sequential sampling schemes have been developed and widely used with proved efficiency properties in terms of the sample size required. The fundamental theories of sequential estimation are largely
based on ground-breaking papers due to \cite{Anscombe (1953)} and \cite{Chow and Robbins (1965)}, in which purely sequential sampling methodologies were developed for the problem of constructing fixed-width confidence intervals.

In a parallel path, \cite{Robbins (1959)} originally formulated the minimum risk point estimation (MRPE) problem. Under the absolute error loss plus linear cost of sampling, a purely sequential stopping rule was proposed to estimate an unknown normal mean $\mu$ when the variance $\sigma^{2}$ was assumed unknown. Then, \cite{Starr (1966)} and \cite{Starr and Woodroofe (1969)} considered a more general loss function and proved a number of interesting asymptotic first-order and second-order properties of the purely sequential MRPE methodology. Using nonlinear renewal theoretical tools developed in \citeauthor{Lai and Siegmund (1977)} (\citeyear{Lai and Siegmund (1977)}, \citeyear{Lai and Siegmund (1979)}), \cite{Woodroofe (1977)} further developed explicit second-order approximations associated with \textit{efficiency}, \textit{risk efficiency} and \textit{regret}. Moreover, \cite{Ghosh and Mukhopadhyay (1980)} provided a different method to evaluate the expression for regret, which could generalize the corresponding result by \cite{Woodroofe (1977)}.

Let us begin with a sequence of i.i.d. positive and continuous random variables $\{W_{n},n\ge 1\}$. For simplicity, we assume that these random variables have all positive moments finite, with mean $E[W_{1}]=\theta$ and
variance $V[W_{1}]=\tau^{2}$ specified in particular. In addition, the distribution function of $W_{1}$ satisfies the following condition:
\begin{equation*}
P(W_{1}\le x)\le Bx^{\alpha} \text{ for all } x>0
\end{equation*}
and for some $B>0$ and $\alpha>0$, both free from $x$. Similar with (1.1) of \cite{Woodroofe (1977)}, all the stopping times arising from the above inference problems can be written in a general form given by
\begin{equation}\label{1.1}
t_{0}=\inf\left\{n\ge m:n^{-1}\sum_{i=1}^{n}W_{i}\le \theta (n/n^{\ast})^{\delta}l_{1}(n)\right\},
\end{equation}
where $\delta>0$ is a positive constant, $l_{1}(n)=1+l_{0}n^{-1}+o(n^{-1})$ as $n\rightarrow \infty$ with $-\infty <l_{0}<\infty$ is a convergent sequence of numbers, $m \ge 1$ indicates a pilot sample size, and $n^{\ast}$ is called an {\it optimal fixed sample size} to be determined in specific problems. One may refer to Section A.4 of the Appendix in \cite{Mukhopadhyay and de Silva (2009)} for more details.

The stopping rule \eqref{1.1} is implemented as follows: After an initial sample of size $m$ is gathered, one observation is taken one-at-a-time as needed successively. Each time when a new observation is recorded, one evaluates
the sample data to check with the stopping rule and terminates sampling at the first time that the stopping rule is satisfied. Therefore, this is a purely sequential sampling scheme and let us denote it by $\mathcal{M}_{0}$.

To introduce the properties of the stopping time $t_0$ and relate its expected value with the optimal fixed sample size, let us define a general function of $\eta(k)$ as follows. For each integer $k\ge 1$, 
\begin{equation}\label{1.2}
\eta(k)=\frac{k}{2}-\frac{1}{2}\delta^{-1}\theta^{2}\tau^{2}-\delta^{-1}l_{0}-(\delta\theta)^{-1}\sum_{n=1}^{\infty}n^{-1}E\left[\left\{\sum_{i=1}^{kn}W_{i}-kn(\delta+1)\theta\right\}^{+}\right],
\end{equation}
where $\{u\}^{+}=\max\{0,u\}$. Under certain conditions, \cite{Woodroofe (1977)} fully studied the properties of $t_{0}$, which are summarized in the following theorem.
\begin{theorem}\label{Theorem 1.1}
For the purely sequential sampling scheme $\mathcal{M}_{0}$ and the stopping time $t_{0}$ given in \eqref{1.1}, we have as $n^{\ast}\rightarrow \infty $: If $m>(\alpha\delta)^{-1}$,
\begin{equation*}\label{1.3}
E_{\theta,\tau }\left[t_{0}-n^{\ast }\right] =\eta (1)+o(1).
\end{equation*}
\end{theorem}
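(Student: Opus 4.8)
The plan is to recast \eqref{1.1} into the canonical form to which the nonlinear renewal theory of \citeauthor{Lai and Siegmund (1977)} and \cite{Woodroofe (1977)} applies, and to read off the second-order expansion of $E_{\theta,\tau}[t_0]$ from there; in fact it suffices to verify that \eqref{1.1} meets the hypotheses of the relevant theorem in \cite{Woodroofe (1977)}, but the substance is as follows. First I would establish the first-order behaviour $t_0/n^\ast\to1$ almost surely and in $L^1$ (from the SLLN, the tail bound $P(W_1\le x)\le Bx^\alpha$, and the assumed finiteness of all moments of $W_1$), so that attention may be restricted to the event $\{t_0\in[(1-\epsilon)n^\ast,(1+\epsilon)n^\ast]\}$.

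Next I would invert the stopping rule. Since $n^{-1}\sum_{i=1}^nW_i\le\theta(n/n^\ast)^\delta l_1(n)$ is equivalent to $n\ge n^\ast\{(\bar W_n/\theta)\,l_1(n)^{-1}\}^{1/\delta}$, where $\bar W_n=n^{-1}\sum_{i=1}^nW_i$, I would Taylor-expand the right-hand side around $\bar W_n=\theta$ and $n=\infty$. Writing $G_n=\theta^{-1}\sum_{i=1}^n(W_i-\theta)$, a mean-zero random walk with $\Var[G_n]=n\theta^{-2}\tau^2$, using $l_1(n)=1+l_0n^{-1}+o(n^{-1})$, replacing $n^\ast/n$ by $1$ up to negligible error, and substituting the leading-order relation $n-n^\ast\approx\delta^{-1}G_n$, this rewrites \eqref{1.1} as
\begin{equation*}
t_0=\inf\Bigl\{n\ge m:\ S_n-\zeta_n\ge n^\ast\Bigr\},\qquad S_n=\sum_{i=1}^nY_i,\quad Y_i=1-(\delta\theta)^{-1}(W_i-\theta),
\end{equation*}
where $\zeta_n$ is a fixed linear combination of $G_n^2/n$ and the constant $l_0$, assembled from the quadratic term of the expansion of $(\bar W_n/\theta)^{1/\delta}$, the curvature of $(n/n^\ast)^\delta$, and the $l_0$-term of $l_1$.

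The random walk $S_n$ has positive drift $E[Y_1]=1$, variance $\Var[Y_1]=(\delta\theta)^{-2}\tau^2$, all moments finite, and a nonarithmetic increment distribution (since $W_1$ is continuous); the perturbation $\zeta_n$ is slowly changing in the sense of nonlinear renewal theory (the fluctuations of $G_n^2/n$ over a block of relative length $\delta$ are uniformly small in probability) and converges in distribution, via $G_n/\sqrt n\Rightarrow N(0,\theta^{-2}\tau^2)$, to a random variable $\zeta_\infty$ equal to a constant plus a scaled $\chi^2_1$. Wald's identity combined with the nonlinear renewal theorem then yields
\begin{equation*}
E_{\theta,\tau}[t_0]=n^\ast+\frac{E[Y_1^2]}{2E[Y_1]}-E[\zeta_\infty]+o(1).
\end{equation*}
To recognise the right-hand side as $n^\ast+\eta(1)$ I would invoke the Spitzer-type fluctuation identity $E[Y_1^2]/(2E[Y_1])=\tfrac12E[Y_1]+\sum_{n\ge1}n^{-1}E[(S_n)^-]$, observe that $\{\sum_{i=1}^nW_i-n(\delta+1)\theta\}^+=\delta\theta\,(S_n)^-$ so that this infinite sum is exactly the last term of $\eta(1)$, and then collect the remaining constants.

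The main obstacle is not the algebra but the uniform control of the remainders, i.e.\ showing that every $o(1)$ and $o_p(\cdot)$ produced above contributes $o(1)$ to the expectation. The upper tail of $t_0$ is harmless because $\bar W_n$ concentrates at $\theta$ while the boundary grows like $n^\delta$ and all moments of $W_1$ are finite. The delicate part is the lower tail: using $P(W_1\le x)\le Bx^\alpha$ one gets, for $m\le n\ll n^\ast$, a bound of order $(n^\ast)^{-m\alpha\delta}$ (up to polynomial factors) for $P_{\theta,\tau}(t_0\le n)$, whence $n^\ast\,P_{\theta,\tau}\bigl(t_0\le\epsilon n^\ast\bigr)\to0$ precisely when $m\alpha\delta>1$, i.e.\ when $m>(\alpha\delta)^{-1}$. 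That is exactly what is needed both for the uniform integrability of $t_0/n^\ast$ (upgrading $t_0/n^\ast\to1$ a.s.\ to convergence of means) and for the uniform integrability of $\zeta_{t_0}$ required to pass from $\zeta_{t_0}\Rightarrow\zeta_\infty$ to $E[\zeta_{t_0}]\to E[\zeta_\infty]$; it is the only point in the argument where the hypothesis $m>(\alpha\delta)^{-1}$ enters.
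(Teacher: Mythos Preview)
Your proposal is correct and follows the same route as the paper, namely Woodroofe's nonlinear renewal-theoretic argument. In fact the paper does not supply its own proof of Theorem~\ref{Theorem 1.1} at all: in Section~\ref{Sect. 5} it simply states that ``Theorem~\ref{Theorem 1.1} and Theorem~\ref{Theorem 2.1} follow from Theorem~2.4 of \cite{Woodroofe (1977)} immediately,'' so your sketch is substantially more detailed than what the paper offers and amounts to an outline of Woodroofe's original proof, including the role of the hypothesis $m>(\alpha\delta)^{-1}$ in controlling the lower tail.
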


In the spirit of \cite{Hall (1983)}, we define the term of sampling operation as the procedure of collecting new observations and evaluating the sample data to make a decision. Let $\varphi$ denote the number of sampling operations. For the purely sequential sampling scheme $\mathcal{M}_{0}$ associated with the stopping time $t_{0}$ given in \eqref{1.1}, we have
\begin{equation*}\label{1.4}
\varphi_{\mathcal{M}_{0}}=t_{0}-m+1,
\end{equation*}
and
\begin{equation}\label{1.5}
E_{\theta,\tau}[\varphi_{\mathcal{M}_{0}}]=n^{\ast}+\eta(1)-m+1+o(1).
\end{equation}

Not surprisingly, the purely sequential sampling scheme $\mathcal{M}_{0}$ requires a lot of sampling operations. In view of this, \cite{Hall (1983)} proposed an accelerated sequential estimation methodology saving sampling operations. Furthermore, \cite{Mukhopadhyay and Solanky (1991)} and \cite{Mukhopadhyay (1996)} developed alternative formulations of the accelerated sequential sampling technique. \cite{Liu (1997)} improved the purely sequential sampling scheme of Anscombe-Chow-Robbins and proposed a new sequential methodology requiring substantially fewer sampling operations. Accelerated sequential sampling methodologies first draw samples sequentially part of the way and then augment with sampling in one single batch. As per the discussions from \citet[p. 228]{Mukhopadhyay and de Silva (2009)}, ``An accelerated sequential strategy would always be operationally much more convenient in practical implementation than its sequential counterpart!''

On the other hand, \cite{Hayre (1985)} considered sampling in bulk or groups, rather than one at a time, and proposed group sequential sampling with variable group sizes.  \cite{Schmegner and Baron (2004)} presented a concept of \textit{sequential planning} as extension and generalization of purely sequential procedures. Implementation of these sampling schemes proved to require only a few groups to cross the stopping boundary, leading to only a moderate increase in sample size.

Most recently, \cite{Mukhopadhyay and Wang (2020a)} first brought up a new type of sequential sampling scheme in which they considered recording $k$ observations at a time, given the thoughts that in real life packaged items purchased in bulk often cost less per unit sample than the cost of an individual item. They discussed this new sampling strategy in both FWCI and MRPE problems for the mean of a normal population. In \cite{Mukhopadhyay and Wang (2020b)}, they revisited these problems and incorporated the newly constructed estimators under permutations within each group for the stopping boundaries, which led to tighter estimation of required sample sizes. \cite{Hu (2020)} developed a double-sequential sampling scheme which is defined as $k$-at-a-time part of way, and then one-at-a-time sequentially, which requires similar sample sizes as the purely sequential strategies but saves sampling operations. \cite{Mukhopadhyay and Sengupta (2021)} further proposed sequential estimation strategies for big data science with minimal computational complexities, where they proposed the idea of $k$-tuples instead of one single observation.  

In this paper, we incorporate these path-breaking ideas with modification to accelerate the purely sequential sampling scheme without sacrificing the first-order and second-order efficiency: (i) using the purely sequential methodology to determine only a proportion $\rho(0<\rho<1)$ of the desired final sample, and then augment with sampling in one single batch; and (ii) drawing a fixed number $k(k\ge 2)$ observations at-a-time successively until termination in the sequential sampling portion. In this way, we expect to save roughly $100(1-k^{-1}\rho)\%$ of sampling operations. Allowing both $\rho$ and $k$ to be $1$, we can therefore propose a new and general sequential sampling scheme, denoted by $\mathcal{M}(\rho,k)$, in Section \ref{Sect. 2} along with a number of desirable properties. The sampling scheme of $\mathcal{M}(\rho,k)$ provides a wide range of sampling procedures with different selections of $k$ and $\rho$ values: 
\begin{itemize}
	\item[(i)] $k=1,\rho=1$, the classic Anscombe-Chow-Robbins purely sequential procedure;
	\item[(ii)] $k=1,0<\rho<1$, the ordinary accelerated sequential procedure;
	\item[(iii)] $k>1,\rho=1$, the $k$-at-a-time purely sequential procedure;
	\item[(iv)] $k\ge2,0<\rho<1$, the $k$-at-a-time improved accelerated sequential procedure.
\end{itemize}

The rest of this paper is organized as follows. Section 2 proposes the general sequential sampling scheme $\mathcal{M}(\rho,k)$ and explores its appealing first-order and second-order properties with a special focus on the $k$-at-a-time improved accelerated sequential procedure. In Section \ref{Sect. 3}, we construct MRPE for an unknown normal mean $\mu$ with the variance $\sigma^{2}$ also assumed unknown as a possible illustration of the newly proposed general sequential sampling scheme $\mathcal{M}(\rho,k)$. In Section \ref{Sect. 4}, we construct bounded variance point estimation for an unknown location parameter $\mu$ from a negative exponential distribution. Simulated performances and real data analysis are included to support and supplement our theory for both methodologies in Section \ref{Sect. 3} and Section \ref{Sect. 4}. Section \ref{Sect. 5} shares some brief concluding thoughts.


\setcounter{equation}{0}
\section{The General Sequential Sampling Scheme $\mathcal{M}(\rho,k)$}\label{Sect. 2}

Under the same assumptions in Section \ref{Sect. 1}, we propose a broader and more general sequential sampling scheme $\mathcal{M}(\rho,k)$ associated with the following stopping times modified in view of \eqref{1.1}: 
\begin{equation}\label{2.1}
\begin{split}
t_{1} &\equiv t_{1}(\rho,k)=\inf\left\{n\ge m:(kn)^{-1}\sum_{i=1}^{n}U_{i}\le \theta \lbrack kn/(\rho n^{\ast})]^{\delta}l_{k}(n)\right\},\\ 
t_{2} &\equiv t_{2}(\rho,k)=\left\lfloor \rho^{-1}kt_{1}(\rho
,k)\right\rfloor + 1.
\end{split}
\end{equation}
In addition to the notation in the stopping rule \eqref{1.1}, $k\ge 1$ is a prefixed integer, $0<\rho \le 1$ is a prefixed proportion, $l_{k}(n)=1+l_{0}(kn)^{-1}+o(n^{-1})$ as $n\rightarrow \infty $ with $-\infty<l_{0}<\infty $ is a convergent sequence of numbers, $U_{i}\equiv \sum_{j=(i-1)k+1}^{ik}W_{j},i=1,2,...$ are i.i.d. random variables, and $\left\lfloor u\right\rfloor$ denotes the largest integer that is strictly smaller than $u$ ($<u$).

The implementation of the stopping rule \eqref{2.1} can be interpreted as follows: Starting with $km$ pilot observations, we sample $k$ observations at-a-time as needed successively and determine a preliminary sample size of $kt_{1}(\rho,k)$. Then, we continue to sample $t_{2}(\rho,k)-kt_{1}(\rho,k)$ additional observations if needed all in one batch. Obviously, $P_{\theta,\tau}(t_{2}(\rho,k)<\infty)=1$ and $t_{2}(\rho,k)\uparrow \infty $ w.p.1 as $n^{\ast}\uparrow \infty $. If both $\rho$ and $k$ are chosen to be 1, then our newly developed sampling scheme $\mathcal{M}(1,1)$ will be the ordinary purely sequential sampling scheme $\mathcal{M}_{0}$ associated with the stopping rule \eqref{1.1}. If $\rho=1$ and $k\ge 2$, the new sampling scheme $\mathcal{M}(1,k)$ is purely sequential, but taking multiple ($k$) observations at-a-time. If $0<\rho <1$ and $k=1,$ the new sampling scheme $\mathcal{M}(\rho,1)$ is an ordinary accelerated sequential sampling scheme, similar with the one proposed in \cite{Mukhopadhyay (1996)}.

Compared with the purely sequential sampling scheme $\mathcal{M}_{0}$, our newly developed sequential sampling scheme $\mathcal{M}(\rho,k)$ can reduce approximately $100(1-k^{-1}\rho)\%$ of the operational time, which makes it flexible in real practice. One is allowed to choose the values of $k$ and $\rho$ to optimize the sampling process, time limitations, and cost considerations under different situations. We incorporate a brief discussion here to illustrate the flexibility of our sampling scheme.

If the cost of taking more observations is high, one may determine a smaller $k$ value and/or choose $\rho$ to be closer to $1$; and if the process is more time sensitive, one may use the methodology with a larger $k$ and/or choose $\rho$ to be closer to 0. As a direct result of operational convenience, the new sequential sampling scheme $\mathcal{M}(0<\rho <1,k\ge 2)$ tends to oversample, but the increase in the expected sample size is bounded by an amount depending on $\rho$ and $k$. Moreover, we have a thorough investigation of the efficiency properties for the sampling scheme $\mathcal{M}(\rho,k)$ in the sequel.

To establish asymptotic properties, we suppose that the following limit operations hold in the spirit of \cite{Hall (1983)}:
\begin{equation}\label{2.2}
m\rightarrow \infty, \ n^{\ast}=O(m^{r}), \ \text{and} \ \lim\sup\frac{m}{n^{\ast}}<\rho,
\end{equation}
where $r>1$ is a fixed constant. In the light of \eqref{1.1} and Theorem \ref{Theorem 1.1}, we are now in a position to state the major results of this paper as Theorem \ref{Theorem 2.1}. See \citet[Theorem 2.4]{Woodroofe (1977)} for more details. 
\begin{theorem}\label{Theorem 2.1}
For the new general sequential sampling scheme $\mathcal{M}(\rho,k)$ and the stopping times $t_{1}(\rho,k)$ and $t_{2}(\rho,k)$ given in \eqref{2.1}, for fixed $0<\rho<1$ and $k$, with $\eta(k)$ defined in \eqref{1.2}, under the limit operations \eqref{2.2}:
\begin{equation*}
E_{\theta,\tau}\left[kt_{1}(\rho,k)-\rho n^{\ast}\right] = \eta(k)+o(1)
\end{equation*}
\begin{equation*}\label{2.3}
\rho^{-1}\eta(k)+o(1)\leq E_{\theta,\tau}\left[t_{2}(\rho,k)-n^{\ast}\right] \leq \rho ^{-1}\eta(k)+1+o(1).
\end{equation*}
And if $\rho=1$, then
$$E_{\theta,\tau}\left[t_{2}(1,k)-n^{\ast}\right]=E_{\theta,\tau}\left[kt_{1}(1,k)- n^{\ast}\right]=\eta(k)+o(1).$$
\end{theorem}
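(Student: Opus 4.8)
My plan is to reduce the whole statement to Theorem~\ref{Theorem 1.1} by a single reparametrization: $t_{1}(\rho,k)$ is itself a stopping rule of the form \eqref{1.1}, but driven by the block sums $\{U_{i}\}$. The first step is to record this ``dictionary''. Since $\sum_{i=1}^{n}U_{i}=\sum_{j=1}^{kn}W_{j}$, multiplying the defining inequality in \eqref{2.1} through by $k$ gives
\[
t_{1}(\rho,k)=\inf\left\{n\ge m:\ n^{-1}\sum_{i=1}^{n}U_{i}\le (k\theta)\left(\frac{kn}{\rho n^{\ast}}\right)^{\delta}l_{k}(n)\right\},
\]
which is exactly \eqref{1.1} with $\{W_{i}\}$ replaced by $\{U_{i}\}$, the mean $\theta$ by $E[U_{1}]=k\theta$, the variance $\tau^{2}$ by $V[U_{1}]=k\tau^{2}$, the optimal size $n^{\ast}$ by $\rho n^{\ast}/k$ (so that $kn/(\rho n^{\ast})=n/(\rho n^{\ast}/k)$), and $l_{1}$ by $l_{k}$; recalling $l_{k}(n)=1+l_{0}(kn)^{-1}+o(n^{-1})=1+(l_{0}/k)n^{-1}+o(n^{-1})$, the constant $l_{0}$ is replaced by $l_{0}/k$. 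The hypotheses transfer: $U_{1}\ge W_{1}$ forces $P(U_{1}\le x)\le P(W_{1}\le x)\le Bx^{\alpha}$ for all $x>0$ with the same $B,\alpha$, and all positive moments of $U_{1}$ are finite.

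Next I would apply Theorem~\ref{Theorem 1.1} to this $U$-walk. Under \eqref{2.2} we have $m\to\infty$ (hence eventually $m>(\alpha\delta)^{-1}$) and $\rho n^{\ast}/k\to\infty$, so Theorem~\ref{Theorem 1.1} yields $E_{\theta,\tau}[t_{1}(\rho,k)-\rho n^{\ast}/k]=\eta_{U}+o(1)$, where $\eta_{U}$ is the constant \eqref{1.2} taken with $k=1$ and evaluated at the $U$-parameters. I would then verify term by term that $k\,\eta_{U}=\eta(k)$: the constant $\tfrac12$ accounts for $k^{-1}\cdot\tfrac{k}{2}$; the $l_{0}$- and variance-contributions each carry a factor $k^{-1}$ under the substitutions $(\theta,\tau^{2},l_{0})\mapsto(k\theta,k\tau^{2},l_{0}/k)$, so multiplying by $k$ recovers the corresponding terms of \eqref{1.2}; and, using $\sum_{i=1}^{n}U_{i}=\sum_{j=1}^{kn}W_{j}$ and $E[U_{1}]=k\theta$,
\[
(\delta k\theta)^{-1}\sum_{n=1}^{\infty}n^{-1}E\!\left[\Bigl\{\textstyle\sum_{i=1}^{n}U_{i}-n(\delta+1)k\theta\Bigr\}^{+}\right]=k^{-1}(\delta\theta)^{-1}\sum_{n=1}^{\infty}n^{-1}E\!\left[\Bigl\{\textstyle\sum_{i=1}^{kn}W_{i}-kn(\delta+1)\theta\Bigr\}^{+}\right],
\]
i.e.\ exactly $k^{-1}$ times the renewal series appearing in \eqref{1.2}. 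Multiplying $E_{\theta,\tau}[t_{1}(\rho,k)-\rho n^{\ast}/k]=\eta_{U}+o(1)$ by $k$ gives the first assertion, $E_{\theta,\tau}[kt_{1}(\rho,k)-\rho n^{\ast}]=\eta(k)+o(1)$.

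The statements for $t_{2}(\rho,k)$ are then purely arithmetic. Since $\lfloor u\rfloor$, the largest integer strictly below $u$, satisfies $u-1\le\lfloor u\rfloor<u$, putting $u=\rho^{-1}kt_{1}(\rho,k)$ gives the \emph{deterministic} sandwich $\rho^{-1}kt_{1}(\rho,k)\le t_{2}(\rho,k)<\rho^{-1}kt_{1}(\rho,k)+1$; subtracting $n^{\ast}$, taking expectations, and inserting the first assertion yields $\rho^{-1}\eta(k)+o(1)\le E_{\theta,\tau}[t_{2}(\rho,k)-n^{\ast}]\le\rho^{-1}\eta(k)+1+o(1)$. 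When $\rho=1$, $kt_{1}(1,k)$ is an integer, so $\lfloor kt_{1}(1,k)\rfloor=kt_{1}(1,k)-1$ and hence $t_{2}(1,k)=kt_{1}(1,k)$ exactly; then $E_{\theta,\tau}[t_{2}(1,k)-n^{\ast}]=E_{\theta,\tau}[kt_{1}(1,k)-n^{\ast}]=\eta(k)+o(1)$ follows from the same reduction with $\rho=1$ (this is just Theorem~\ref{Theorem 1.1} for the $U$-walk with target $n^{\ast}/k$).

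The one step that needs care is the invocation of Theorem~\ref{Theorem 1.1} for the $U$-walk: that expansion, after \cite{Woodroofe (1977)}, is classically stated with a \emph{fixed} pilot size, whereas \eqref{2.2} lets $m\to\infty$ (with $m$ possibly of the same order as $n^{\ast}$). I would handle this by appealing to the form of the nonlinear-renewal second-order expansion that permits a growing pilot and is standard in the accelerated-sequential literature following \cite{Hall (1983)}, after checking that the control of $m/n^{\ast}$ imposed by \eqref{2.2} indeed keeps the $U$-pilot of size $m$ strictly below the $U$-target $\rho n^{\ast}/k$ and keeps the relevant uniform-integrability and boundary-crossing estimates in force. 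Beyond that, the scheme $\mathcal{M}(\rho,k)$ adds no new probabilistic content over Theorem~\ref{Theorem 1.1}: once the $\{W_{i}\}\leftrightarrow\{U_{i}\}$ dictionary is set up, everything is the deterministic lattice-and-batch bookkeeping above.
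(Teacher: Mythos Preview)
Your proposal is correct and is essentially the same approach as the paper's. In Section~5 the authors dispose of Theorem~\ref{Theorem 2.1} in a single sentence, stating that it ``follow[s] from Theorem~2.4 of \cite{Woodroofe (1977)} immediately''; your reduction---recasting $t_{1}(\rho,k)$ as a Woodroofe-type rule for the block-sum walk $\{U_{i}\}$, reading off the translated parameters $(\theta,\tau^{2},l_{0},n^{\ast})\mapsto(k\theta,k\tau^{2},l_{0}/k,\rho n^{\ast}/k)$, verifying $k\eta_{U}=\eta(k)$, and then sandwiching $t_{2}$ via $\lfloor u\rfloor$---is exactly the content behind that sentence, spelled out in full. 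Your explicit check that the variance, $l_{0}$, and renewal-series terms each scale by $k^{-1}$ (so that multiplying by $k$ recovers \eqref{1.2}) is the right bookkeeping, and your caveat about invoking the second-order expansion with a growing pilot under the \cite{Hall (1983)}-type regime \eqref{2.2} is precisely the technical point the paper is tacitly absorbing into ``immediately''.
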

It is clear when $0<\rho<1$ and $k\ge 2$,
our new sampling scheme $\mathcal{M}(\rho,k)$ is expected to oversample up to $\rho^{-1}\eta(k)+1+o(1)$ observations. In terms of the number of sampling operations, it is not hard to obtain that
\begin{equation*}\label{2.4}
\varphi_{\mathcal{M}(\rho,k)}=t_{1}-m+1+I(\rho <1),
\end{equation*}
where $I(D)$ stands for the indicator function of an event $D$. We also have
\begin{equation}\label{2.5}
E_{\theta,\tau}[\varphi_{\mathcal{M}(\rho,k)}]=k^{-1}[\rho n^{\ast}+\eta(k)-km]+1+I(\rho <1)+o(1).
\end{equation}
Comparing \eqref{1.5} and \eqref{2.5}, the new general sequential sampling scheme $\mathcal{M}(\rho,k)$ requires roughly $100(1-k^{-1}\rho)\%$ fewer sampling operations than those of the purely sequential sampling scheme $\mathcal{M}_{0}$, based on the actual choices of $k$ and $\rho$. Therefore, it enjoys great operational convenience with a cost of only a slight increase in the projected final sample size. Starting from here and thereafter, we mainly focus on the sampling scheme $\mathcal{M}(\rho,k)$ with $0<\rho<1$ and $k \ge 2$ which makes it specifically the $k$-at-a-time improved accelerated sequential sampling scheme. Nevertheless, one may note that all the theories and methodologies we discuss actually work for the general sequential sampling scheme with $0<\rho\leq 1$ and/or $k\geq 1$.


\setcounter{equation}{0}
\section{Minimum Risk Point Estimation for a Normal Mean}\label{Sect. 3}

In this section, we discuss minimum risk point estimation (MRPE) for a normal mean as an illustration of our $k$-at-a-time improved accelerated sequential sampling scheme $\mathcal{M}(\rho,k)$. Having recorded a sequence of independent observations, $X_{1},...,X_{n}, n\ge 2$ from a $N(\mu,\sigma^{2})$ population, where both 
$\mu$ and $\sigma$ are unknown, $\mu \in \mathcal{R}$ and $\sigma \in \mathcal{R}^{+}$, we denote the sample mean and sample variance as follows:
\begin{equation*}
\begin{tabular}{rl}
Sample mean: & $\bar{X}_{n}=n^{-1}\sum_{i=1}^{n}X_{i}$, \\ 
Sample variance: & $S_{n}^{2}=(n-1)^{-1}\sum_{i=1}^{n}(X_{i}-\bar{X}_{n})^{2}$.
\end{tabular}
\end{equation*}

According to \cite{Robbins (1959)}, the MRPE for $\mu$ under the squared-error loss plus linear cost of sampling can be formulated as follows. Define the loss function by
\begin{equation}\label{3.1}
L_{n} \equiv L_{n}\left(\mu,\bar{X}_{n}\right)=A\left(\bar{X}_{n}-\mu\right)^{2}+cn,
\end{equation}
where $A(>0)$ is a known weight function and $c(>0)$ is the known unit cost of each observation. Associated with the loss function in \eqref{3.1}, we have the following risk function:
\begin{equation*}\label{3.2}
R_{n} \equiv E_{\mu,\sigma}\left[L_{n}\left(\mu,\bar{X}_{n}\right)\right]=A\sigma^{2}n^{-1}+cn,
\end{equation*}
which is minimized at 
\begin{equation}\label{3.3}
n^{\ast} \equiv n^{\ast}(c)=\sigma\sqrt{A/c},
\end{equation}
with the resulting minimum risk 
\begin{equation}\label{3.4}
R_{n^{\ast}}=2cn^{\ast}.
\end{equation}

One should note that there exists no fixed-sample-size procedure that achieves the exact minimum risk due to the fact that $\sigma$ is unknown. A fundamental solution is due to the purely sequential MRPE methodology in the light of \cite{Robbins (1959)}, \cite{Starr (1966)} and \cite{Starr and Woodroofe (1969)}, briefly introduced below.

Since the population standard deviation $\sigma$ remains unknown, it is essential for us to estimate it customarily using the sample standard deviation $S_{n}$, and update its value at every stage. One may start with $m$ pilot observations, $X_{1},...,X_{m},m\ge 2,$ and then sample one additional observation at-a-time as needed until the following stopping rule is satisfied:
\begin{equation}\label{3.5}
\text{Methodology} \ \mathcal{P}_{0}: N_{\mathcal{P}_{0}}\equiv N_{\mathcal{P}_{0}}(c)=\inf \left\{n\ge m:n\ge S_{n}\sqrt{A/c}\right\}.
\end{equation}
It is clear that $P_{\mu,\sigma}\{N_{\mathcal{P}_{0}}<\infty \}=1$ and $N_{\mathcal{P}_{0}}\uparrow \infty $ w.p.1 as $c\downarrow 0$. Upon termination with the accrued data $\{N_{\mathcal{P}_{0}},X_{1},...,X_{m},...,X_{N_{\mathcal{P}_{0}}}\}$, we estimate the unknown normal mean $\mu $ with $\bar{X}_{N_{\mathcal{P}_{0}}}\equiv N_{\mathcal{P}_{0}}^{-1}\sum_{i=1}^{N_{\mathcal{P}_{0}}}X_{i}$. The achieved risk is then given by
\begin{equation}\label{3.6}
R_{N_{\mathcal{P}_{0}}}(c)\equiv E_{\mu,\sigma}\left[L_{N_{\mathcal{P}_{0}}}\left(\mu,\bar{X}_{N_{\mathcal{P}_{0}}}\right)\right]=AE_{\mu,\sigma}\left[\left(\bar{X}_{N_{\mathcal{P}_{0}}}-\mu\right)^{2}\right]+cE_{\mu,\sigma}[N_{\mathcal{P}_{0}}].
\end{equation}

To measure the closeness between the achieved risk in \eqref{3.6} and the minimum risk in \eqref{3.4}, \cite{Robbins (1959)} and \cite{Starr (1966)} respectively constructed the following two crucial notions, namely, the \textit{risk efficiency} and \textit{regret}:
\begin{equation*}\label{3.7}
\begin{tabular}{rl}
(i) & Risk Efficiency: $\xi_{\mathcal{P}_{0}}(c)\equiv R_{N_{\mathcal{P}_{0}}}(c)/R_{n^{\ast }}(c)=\frac{1}{2}E_{\mu ,\sigma }[N_{\mathcal{P}_{0}}/n^{\ast }]+\frac{1}{2}E_{\mu ,\sigma }[n^{\ast }/N_{\mathcal{P}_{0}}]$;
\\
(ii) & Regret: $\omega_{\mathcal{P}_{0}}(c)\equiv R_{N_{\mathcal{P}_{0}}}(c)-R_{n^{\ast }}(c)=cE_{\mu,\sigma}\left[N_{\mathcal{P}_{0}}^{-1}(N_{\mathcal{P}_{0}}-n^{\ast})^{2}\right].$
\end{tabular}
\end{equation*}

Alternatively, the stopping rule \eqref{3.5} can be rewritten in a way that we presented \eqref{1.1}. By using the Helmert transformation, we express $N_{\mathcal{P}_{0}}=N_{\mathcal{P}_{0}}^{\prime}+1$ w.p.1, where the new stopping time $N_{\mathcal{P}_{0}}^{\prime}$ is defined as follows:
\begin{equation}\label{3.8}
N_{\mathcal{P}_{0}}^{\prime}=\inf\left\{n\ge m-1:n^{-1}\sum_{i=1}^{n}W_{i}\le (n/n^{\ast})^{2}(1+2n^{-1}+n^{-2})\right\},
\end{equation}
with $\delta=2$, $l_{0}=2$, and $W_{1},W_{2},...$ being i.i.d. $\chi_{1}^{2}$ random variables such that $\theta=1,\tau^{2}=1$ and $\alpha=1/2 $. From \cite{Robbins (1959)}, \cite{Starr (1966)} and \cite{Woodroofe (1977)}, we conclude the following theorem to address the asymptotic first-order and second-order properties that the purely sequential MRPE methodology $\mathcal{P}_{0}$ enjoys. One may refer to \cite{Mukhopadhyay and de Silva (2009)} for more details.

\begin{theorem}\label{Theorem 3.1}
For the purely sequential MRPE methodology $\mathcal{P}_{0}$ given in \eqref{3.5}, for all fixed $\mu,\sigma,m$ and $A$, we have: as $c\rightarrow 0$,
\begin{enumerate}
	\item[(i)] Asymptotic First-Order Efficiency: $E_{\mu,\sigma}\left[N_{\mathcal{P}_{0}}/n^{\ast}\right]\rightarrow 1$ if $m\geq 2$;
	\item[(ii)] Asymptotic Second-Order Efficiency: $E_{\mu,\sigma}\left[N_{\mathcal{P}_{0}}-n^{\ast}\right]=\eta_1(1)+o(1)$ if $m\geq 3$, where $\eta_1(1)=-\frac{1}{2}\sum_{n=1}^{\infty}n^{-1}E\left[\left\{\chi_{n}^{2}-3n\right\}^+\right]$;
	\item[(iii)] Asymptotic First-Order Risk Efficiency: $\xi_{\mathcal{P}_{0}}\left( c\right) \rightarrow 1$ if $m\geq 3$;
	\item[(iv)] Asymptotic Second-Order Risk Efficiency: $\omega_{\mathcal{P}_{0}}(c)=\frac{1}{2}c+o(c)$ if $m\geq 4$.
\end{enumerate}
\end{theorem}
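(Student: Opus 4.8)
The plan is to reduce Theorem~\ref{Theorem 3.1} to the general results already available for the scheme $\mathcal{M}_0$, namely Theorem~\ref{Theorem 1.1} and the underlying nonlinear renewal theory of \cite{Woodroofe (1977)}, by checking that the stopping rule \eqref{3.5} fits the canonical form \eqref{1.1} after the Helmert transformation. First I would verify the rewriting $N_{\mathcal{P}_0}=N_{\mathcal{P}_0}'+1$ w.p.1, where $N_{\mathcal{P}_0}'$ is as in \eqref{3.8}: expand $S_n^2=(n-1)^{-1}\sum_{i=1}^{n-1}W_i$ with the $W_i$ i.i.d.\ $\chi_1^2$, and observe that the event $\{n\ge S_n\sqrt{A/c}\}$ is equivalent to $\{(n-1)^{-1}\sum W_i \le n^2/n^{\ast 2}\}$; a short rearrangement and an index shift $n\mapsto n-1$ put this into the form \eqref{3.8} with $\delta=2$, $l_0=2$, $l_1(n)=1+2n^{-1}+n^{-2}$, $\theta=\tau^2=1$, and the boundary-crossing tail condition $P(W_1\le x)\le Bx^{1/2}$ holding with $\alpha=1/2$ (since $\chi_1^2$ has a density behaving like $x^{-1/2}$ near $0$).

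Once the identification is made, parts (i) and (ii) follow almost immediately: (i) is the standard first-order result $N_{\mathcal{P}_0}/n^\ast\to 1$ a.s.\ together with uniform integrability, which holds once $m\ge 2$ guarantees $S_m^2>0$ w.p.1; (ii) is a direct application of Theorem~\ref{Theorem 1.1} with $m$ replaced by $m-1$, so the moment requirement $m>(\alpha\delta)^{-1}=1$ on the shifted pilot size becomes $m-1>1$, i.e.\ $m\ge 3$, and $\eta_1(1)$ comes from specializing \eqref{1.2}: with $\delta=2$, $\theta=\tau^2=1$, $l_0=2$ the first three terms $\tfrac12-\tfrac14-1$ combine with the absolute-error tail term, and after the usual simplification (the $\{\sum_{i=1}^n W_i - 3n\}^+$ piece with $k=1$) one recovers $\eta_1(1)=-\tfrac12\sum_{n\ge1}n^{-1}E[\{\chi_n^2-3n\}^+]$. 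For (iii) and (iv) I would write the risk efficiency and regret in terms of $N_{\mathcal{P}_0}$ as displayed after \eqref{3.6}, and then use the second-order expansion of $E_{\mu,\sigma}[N_{\mathcal{P}_0}^{-1}(N_{\mathcal{P}_0}-n^\ast)^2]$. The key analytic input is that $(N_{\mathcal{P}_0}-n^\ast)/\sqrt{n^\ast}$ is asymptotically normal with variance $\tfrac12\delta^{-1}\theta^{-2}\tau^2=\tfrac12\cdot\tfrac12\cdot 1=\tfrac14$ (this is the $\delta=2$ case of Woodroofe's CLT), so $E[(N_{\mathcal{P}_0}-n^\ast)^2]\sim \tfrac12 n^\ast$; dividing by $N_{\mathcal{P}_0}\sim n^\ast$ and multiplying by $c$, and using $cn^\ast=c\sigma\sqrt{A/c}\to 0$ appropriately, gives $\omega_{\mathcal{P}_0}(c)=\tfrac12 c+o(c)$, and then $\xi_{\mathcal{P}_0}(c)=1+\omega_{\mathcal{P}_0}(c)/(2cn^\ast)\to 1$. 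The moment conditions $m\ge3$ for (iii) and $m\ge4$ for (iv) are exactly what is needed to make the relevant negative moments of $N_{\mathcal{P}_0}$ (equivalently, $E[n^\ast/N_{\mathcal{P}_0}]$ and $E[(n^\ast/N_{\mathcal{P}_0})^2]$-type quantities) uniformly integrable.

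The main obstacle is not the algebraic identification but the uniform integrability and negative-moment control needed to upgrade the almost-sure and distributional limits to convergence of expectations in (iii) and (iv): one must show that $\{N_{\mathcal{P}_0}^{-1}(N_{\mathcal{P}_0}-n^\ast)^2/n^\ast\}$ is uniformly integrable as $c\downarrow 0$, which requires bounds on $P(N_{\mathcal{P}_0}\le \epsilon n^\ast)$ of the order $(n^\ast)^{-p}$ for $p$ large enough, and this is precisely where the tail condition $\alpha=1/2$ and the successively larger pilot sizes $m\ge3,4$ enter. I would handle this exactly as in \cite{Woodroofe (1977)} and \cite{Ghosh and Mukhopadhyay (1980)}: control the lower tail of $N_{\mathcal{P}_0}$ via a Bonferroni/Markov argument on $\sum_{i=1}^n W_i$, invoke the reverse submartingale property of $n^{-1}\sum_{i=1}^n W_i$, and then cite the nonlinear renewal theorem to identify the limiting constants. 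Since all of this machinery is assumed available from the cited literature, the proof is essentially a verification that the present normalization matches it.
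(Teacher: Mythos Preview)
Your proposal is essentially correct and follows the standard route through the Helmert reduction and Woodroofe's nonlinear renewal theory. The paper, however, does not actually prove Theorem~\ref{Theorem 3.1}: in Section~\ref{Sect. 5} it states that ``Theorem~\ref{Theorem 3.1} is paraphrased from \citet[6.4.14]{Mukhopadhyay and de Silva (2009)}'' and attributes the original results to \cite{Robbins (1959)}, \cite{Starr (1966)}, and \cite{Woodroofe (1977)}. So you are supplying a proof sketch where the paper supplies only a citation; your outline is precisely the argument one finds in those sources and, in fact, mirrors what the paper itself does in detail for the more general Theorem~\ref{Theorem 3.2}.

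One small correction: for $W_i\sim\chi_1^2$ you have $\theta=E[W_1]=1$ but $\tau^2=\Var(W_1)=2$, not $1$. (The paper itself has the same slip after \eqref{3.8}, though it uses the correct $\tau^2=2$ after \eqref{3.13}.) With $\tau^2=2$ your three ``constant'' terms in $\eta(1)$ become $\tfrac12-\tfrac12-1=-1$, and adding the $+1$ from the index shift $N_{\mathcal{P}_0}=N_{\mathcal{P}_0}'+1$ gives exactly $\eta_1(1)=-\tfrac12\sum_{n\ge1}n^{-1}E[\{\chi_n^2-3n\}^+]$. Likewise the asymptotic variance in your CLT step should be $\tfrac12\delta^{-1}\theta^{-2}\tau^2=\tfrac12$, not $\tfrac14$; this is what makes $E[(N_{\mathcal{P}_0}-n^\ast)^2]\sim\tfrac12 n^\ast$ and hence $\omega_{\mathcal{P}_0}(c)=\tfrac12c+o(c)$ come out consistently. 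Apart from this arithmetic slip, the structure of your argument is sound.
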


\subsection{The general sequential MRPE methodology}\label{Sub 3.1}

Following \eqref{2.3}, we propose a broader and more general sequential MRPE methodology $\mathcal{P}(\rho,k)$:
\begin{equation}\label{3.10}
\begin{split}
T_{\mathcal{P}(\rho,k)} &\equiv T_{\mathcal{P}(\rho,k)}(c)=\inf \left\{n\geq0:m+kn\ge \rho S_{m+kn}\sqrt{A/c}\right\},\\
N_{\mathcal{P}(\rho,k)} &\equiv N_{\mathcal{P}(\rho,k)}(c)=\left\lfloor\rho^{-1}(m+kT_{\mathcal{P}(\rho ,k)})\right\rfloor +1.
\end{split}
\end{equation}
Here, $0<\rho \le 1$ is a prefixed proportion, $k\ge 1$ is a prefixed positive integer, $m\ge 2$ again indicates a pilot sample size but picked such that $m-1\equiv 0{\pmod k}$, and $\left\lfloor u\right\rfloor$ continues to denote the largest integer that is strictly smaller than $u$. Denote that $m-1=m_{0}k$ for some integer $m_{0}\ge 1$, we further assume that the following limit operations hold:
\begin{equation}\label{3.11}
m_{0}\rightarrow \infty, m=m_{0}k+1 \rightarrow \infty, c\equiv c(m)=O(m^{-2r}), n^{\ast}=O(m^{r}), \text{ and }\lim\sup\frac{m}{n^{\ast}}<\rho,
\end{equation}
where $r>1$ is a fixed constant. The new methodology $\mathcal{P}(\rho,k)$ is implemented as follows.

Starting with $m(=m_{0}k+1)$ pilot observations, $X_{1},...,X_{m},$ we sample $k$ observations at-a-time as needed and determine $T_{\mathcal{P}(\rho,k)}$, which indicates the number of sequential sampling operations according to the stopping rule \eqref{3.10}. Next, we continue to sample $(N_{\mathcal{P}(\rho,k)}-m-kT_{\mathcal{P}(\rho,k)})$ additional observations all in one batch. Upon termination, based on the fully gathered data
\begin{equation*}
\left\{T_{\mathcal{P}(\rho,k)},N_{\mathcal{P}(\rho,k)},X_{1},...,X_{m},...,X_{m+kT_{\mathcal{P}(\rho,k)}},...,X_{N_{\mathcal{P}(\rho,k)}}\right\},
\end{equation*}
we construct the minimum risk point estimator $\bar{X}_{N_{\mathcal{P}(\rho,k)}}=N_{\mathcal{P}(\rho,k)}^{-1}\sum_{i=1}^{N_{\mathcal{P}(\rho,k)}}X_{i}$ for $\mu$, and derive
\begin{equation*}\label{3.12}
\begin{tabular}{rl}
(i) & Risk Efficiency: $\xi_{\mathcal{P}(\rho,k)}(c)\equiv R_{N_{\mathcal{P}(\rho ,k)}}(c)/R_{n^{\ast}}(c)=\frac{1}{2}E_{\mu,\sigma }[N_{\mathcal{P}(\rho,k)}/n^{\ast}]+\frac{1}{2}E_{\mu,\sigma}[n^{\ast}/N_{\mathcal{P}(\rho,k)}]$;\\ 
(ii) & Regret: $\omega_{\mathcal{P}(\rho,k)}(c)\equiv R_{N_{\mathcal{P}(\rho,k)}}(c)-R_{n^{\ast }}(c)=cE_{\mu,\sigma }\left[ N_{\mathcal{P}(\rho,k)}^{-1}(N_{\mathcal{P}(\rho,k)}-n^{\ast})^{2}\right] $.
\end{tabular}
\end{equation*}

Obviously, $P_{\mu,\sigma}(N_{\mathcal{P}(\rho,k)}<\infty)=1$ and $N_{\mathcal{P}(\rho,k)}\uparrow \infty $ w.p.1 as $c\downarrow 0$. If both $\rho$ and $k$ are chosen to be 1, then the sequential MRPE methodology $\mathcal{P}(1,1)$ will be the ordinary purely sequential MRPE methodology $\mathcal{P}_{0}$ as per \eqref{3.5}. That is, $\mathcal{P}(1,1)\equiv \mathcal{P}_{0}$.

Along the line of \eqref{3.8}, we can similarly express the stopping time $T_{\mathcal{P}(\rho,k)}$ from \eqref{3.10} in the general form provided in \eqref{2.1}. Define $T_{\mathcal{P}(\rho,k)}=T_{\mathcal{P}(\rho,k)}^{\prime}-m_{0}$ w.p.1. Then $T_{\mathcal{P}(\rho,k)}^{\prime}$ is a new stopping time which can be rewritten as
\begin{equation}\label{3.13}
T_{\mathcal{P}(\rho,k)}^{\prime}=\inf\left\{n\ge m_{0}:(kn)^{-1}\sum_{i=1}^{n}U_{i}\le [kn/(\rho n^{\ast})]^{2}\left(1+2(kn)^{-1}+(kn)^{-2}\right) \right\},
\end{equation}
where $\delta=2,l_{0}=2$, and $U_{i}=\sum_{j=(i-1)k+1}^{ik}W_{j},i=1,2,...$ with $W_{1},W_{2},...$ being i.i.d. $\chi_{1}^{2}$ random variables such that $\theta=1,\tau^{2}=2$ and $\alpha=1/2$. Therefore, $U_{1},U_{2},...$ are i.i.d. $\chi_{k}^{2}$ random variables. Now we state a number of asymptotic first-order and second-order properties of the improved accelerated sequential MRPE methodology $\mathcal{P}(\rho,k)$, summarized in the following theorem.
\begin{theorem}\label{Theorem 3.2}
For the general sequential MRPE methodology $\mathcal{P}(\rho,k)$ given in \eqref{3.10}, for all fixed $\mu,\sigma,A, $k$ and 0<\rho<1$, under the limit operations \eqref{3.11}:
\begin{enumerate}
	\item[(i)] Asymptotic First-Order Efficiency: $E_{\mu,\sigma}\left[ N_{\mathcal{P}(\rho,k)}/n^{\ast}\right]\rightarrow 1$;
	\item[(ii)] Asymptotic Second-Order Efficiency: $\rho^{-1}\eta_1(k)+o(1)\leq E_{\mu,\sigma}\left[ N_{\mathcal{P}(\rho,k)}-n^{\ast}\right] \le \rho^{-1}\eta_1(k)+1+o(1)$, where $\eta_1(k)=\frac{k-1}{2}-\frac{1}{2}\sum_{n=1}^{\infty}n^{-1}E\left[ \left\{\chi_{kn}^{2}-3kn\right\}^{+} \right]$;
	\item[(iii)] Asymptotic First-Order Risk Efficiency: $\xi_{\mathcal{P}(\rho,k)}\left( c\right) \rightarrow 1$;
	\item[(iv)] Asymptotic Second-Order Risk Efficiency: $\omega_{\mathcal{P}(\rho,k)}(c)=\frac{1}{2}\rho^{-1}c+o(c)$.
\end{enumerate}
\end{theorem}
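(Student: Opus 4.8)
The plan is to reduce Theorem \ref{Theorem 3.2} to the already-established general scheme result (Theorem \ref{Theorem 2.1}) applied to the $\chi^2$-representation \eqref{3.13}, and then to convert statements about the sample size into statements about risk efficiency and regret. First I would record that under \eqref{3.13} the parameters are $\delta = 2$, $l_0 = 2$, $\theta = 1$, $\tau^2 = 2$, $\alpha = 1/2$, so that the abstract quantity $\eta(k)$ of \eqref{1.2} specializes to $\eta(k) = \tfrac{k}{2} - \tfrac{1}{4}\cdot 2 - \tfrac{1}{2}\cdot 2 - \tfrac{1}{4}\sum_{n\ge 1} n^{-1} E[\{\chi_{kn}^2 - 3kn\}^+]$; collecting the constants gives exactly $\eta(k) = \tfrac{k-1}{2} - \tfrac12\sum_{n\ge 1} n^{-1} E[\{\chi_{kn}^2 - 3kn\}^+] = \eta_1(k)$ as defined in part (ii). (I should double-check the coefficient on the infinite sum: the factor in \eqref{1.2} is $(\delta\theta)^{-1} = 1/2$, and the expression inside is $\{\sum_{i=1}^{kn} W_i - kn(\delta+1)\theta\}^+ = \{\chi_{kn}^2 - 3kn\}^+$, so the coefficient is indeed $1/2$ — this matches $\eta_1(k)$.) I also need to verify the translation between $N_{\mathcal{P}(\rho,k)}$ and $t_2(\rho,k)$: from \eqref{3.10}, $N_{\mathcal{P}(\rho,k)} = \lfloor \rho^{-1}(m + kT_{\mathcal{P}(\rho,k)})\rfloor + 1$ and $T_{\mathcal{P}(\rho,k)} = T'_{\mathcal{P}(\rho,k)} - m_0$ with $m = m_0 k + 1$, so $m + kT_{\mathcal{P}(\rho,k)} = m_0 k + 1 + k(T'_{\mathcal{P}(\rho,k)} - m_0) = kT'_{\mathcal{P}(\rho,k)} + 1$; thus $N_{\mathcal{P}(\rho,k)} = \lfloor \rho^{-1}(kT'_{\mathcal{P}(\rho,k)}+1)\rfloor + 1$, which differs from the $t_2 = \lfloor \rho^{-1} k t_1\rfloor + 1$ of \eqref{2.1} only by the harmless $\rho^{-1}$-scaled constant, absorbed into the $o(1)$.

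Having made that identification, parts (i) and (ii) follow \emph{directly}: Theorem \ref{Theorem 2.1} (with the scheme parameters above, noting $m_0 \to \infty$ per \eqref{3.11} implies the limit operations \eqref{2.2} with $m$ there playing the role of $m_0$ and the condition $m_0 > (\alpha\delta)^{-1} = 1$ being eventually satisfied) gives $E[kT'_{\mathcal{P}(\rho,k)} - \rho n^*] = \eta_1(k) + o(1)$ and the two-sided bound $\rho^{-1}\eta_1(k) + o(1) \le E[N_{\mathcal{P}(\rho,k)} - n^*] \le \rho^{-1}\eta_1(k) + 1 + o(1)$, which is (ii); dividing by $n^* \to \infty$ and using $n^* \to \infty$ from \eqref{3.11} yields $E[N_{\mathcal{P}(\rho,k)}/n^*] \to 1$, which is (i). The only subtlety is checking the moment condition ($m_0 \ge 2$, i.e. the analog of $m > (\alpha\delta)^{-1}$) is met — since $m_0 \to \infty$ this is automatic for small $c$.

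For (iii) and (iv), the strategy is the classical Starr--Woodroofe decomposition. Write $n = N_{\mathcal{P}(\rho,k)}$. The risk efficiency is $\xi(c) = \tfrac12 E[n/n^*] + \tfrac12 E[n^*/n]$; the first term $\to \tfrac12$ by (i) together with uniform integrability of $n/n^*$ (which I would get from a reverse-submartingale / basic-inequality bound on the stopping time as in \cite{Starr (1966)}, \cite{Woodroofe (1977)}, using $n^* = O(m^r)$, $c = O(m^{-2r})$), and the second term $\to \tfrac12$ by the same UI plus $n^*/n \to 1$ a.s. For (iv), using the second moment structure of $\bar X_n$ conditional on $n$ — specifically that $E[(\bar X_n - \mu)^2 \mid n$ determined by $S$-sequence$]$ behaves like $\sigma^2/E[n]$-type corrections — the regret is $\omega(c) = c E[n^{-1}(n - n^*)^2]$, and the point is that $n = \rho^{-1}kT'_{\mathcal{P}(\rho,k)} + O(1)$, where $kT'_{\mathcal{P}(\rho,k)}$ is, by nonlinear renewal theory (Woodroofe/Lai--Siegmund), asymptotically such that $(kT'_{\mathcal{P}(\rho,k)} - \rho n^*)/\sqrt{\rho n^*}$ converges to a normal limit with variance $\tfrac12 \tau^2 \delta^{-2} = \tfrac12 \cdot 2 \cdot \tfrac14 = \tfrac14$ per unit... more carefully, $E[(kT'_{\mathcal{P}(\rho,k)} - \rho n^*)^2] \sim \tfrac12 \rho n^*$ (the variance of the first-passage time), so $E[(n - n^*)^2] = \rho^{-2} E[(kT'_{\mathcal{P}(\rho,k)} - \rho n^*)^2] + o(n^*) \sim \rho^{-2}\cdot \tfrac12\rho n^* = \tfrac12 \rho^{-1} n^*$, whence $\omega(c) = c \cdot E[n^{-1}](\tfrac12\rho^{-1}n^* + o(n^*)) = \tfrac12 \rho^{-1} c + o(c)$ since $E[n^{-1}] \sim (n^*)^{-1}$.

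The main obstacle is (iv): it requires the second-order distributional behavior of the sequential stopping time $T'_{\mathcal{P}(\rho,k)}$ — an asymptotic-normality/variance statement for the $k$-at-a-time first-passage problem — rather than just the expectation asymptotics that Theorem \ref{Theorem 2.1} supplies, so I would invoke the nonlinear renewal theory of Woodroofe and Lai--Siegmund adapted to the i.i.d.\ $\chi_k^2$ increments $U_i$, carefully tracking how the group size $k$ and the truncation at $\lfloor \rho^{-1}(\cdot)\rfloor$ affect the variance constant, and confirming that the batch-augmentation step contributes only $O(1)$ and hence $o(n^*)$ to $E[(n-n^*)^2]$. Establishing the uniform integrability of $(N_{\mathcal{P}(\rho,k)}/n^*)^2$ needed to promote the a.s.\ limits to $L^1$/$L^2$ convergence is the other routine-but-essential ingredient, handled by the standard moment bounds for stopping times of this type under \eqref{3.11}.
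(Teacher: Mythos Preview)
Your overall route matches the paper's: reduce (i)--(ii) to the general Theorem \ref{Theorem 2.1} via the $\chi^2$-representation \eqref{3.13}, and obtain (iii)--(iv) from asymptotic normality of $(N_{\mathcal{P}(\rho,k)}-n^*)/\sqrt{n^*}$ combined with uniform integrability. Your identification $\eta(k)=\eta_1(k)$ and the variance constant $\tfrac12\rho^{-1}$ are both correct, and your deduction of (i) from (ii) by dividing through by $n^*$ is a clean shortcut the paper does not take (it proves (i) directly via a sandwich inequality and dominated convergence with the bound $N/n^*\le \sigma^{-1}\sup_{n\ge2}S_n+2$).

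The one place your proposal needs tightening is the \emph{choice} of uniform integrability. For (iv) you write ``uniform integrability of $(N_{\mathcal{P}(\rho,k)}/n^*)^2$'', but that yields only $E[(N-n^*)^2]=o(n^{*2})$, far too weak to extract the constant $\tfrac12\rho^{-1}$. What is actually required---and what the paper isolates as Lemma \ref{Lemma 5.3}---is uniform integrability of $(N_{\mathcal{P}(\rho,k)}-n^*)^2/n^*$, which together with the asymptotic normality (the paper's Lemma \ref{Lemma 5.2}) gives $E[(N-n^*)^2/n^*]\to\tfrac12\rho^{-1}$ and hence $\omega(c)=cE[N^{-1}(N-n^*)^2]=\tfrac12\rho^{-1}c+o(c)$. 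Similarly for (iii), controlling $E[n^*/N]$ is not ``the same UI'' as for $N/n^*$: the ratio $n^*/N$ can be as large as $n^*/m$, so one needs the tail estimate $P_{\mu,\sigma}(N_{\mathcal{P}(\rho,k)}\le\varepsilon n^*)=O(n^{*-\gamma/(2r)})$ (the paper's Lemma \ref{Lemma 5.1}) to kill the contribution from the lower tail. Once you swap in these two specific ingredients, your argument is exactly the paper's.
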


Again, when $\rho=1$, then we have the exact expression $E_{\mu,\sigma}\left[ N_{\mathcal{P}(1,k)}-n^{\ast}\right]=\eta_1(k)+o(1)$ instead of the inequality in Theorem \ref{Theorem 3.2} (ii). The number of sampling operations for the general sequential MRPE methodology $\mathcal{P}(\rho,k)$ is  
\begin{equation}\label{3.15}
\varphi_{\mathcal{P}(\rho,k)}=T_{\mathcal{P}(\rho,k)}+1+I(\rho<1),
\end{equation}
and
\begin{equation}\label{3.16}
E_{\mu,\sigma}[\varphi_{\mathcal{P}(\rho,k)}]=k^{-1}[\rho n^{\ast}-m+\eta_1(k)]+1+I(\rho <1)+o(1).
\end{equation}

For any integer $k\ge 1$, $\eta_1(k)=\frac{k-1}{2}-\frac{1}{2}\Sigma_{n=1}^{\infty}n^{-1}E\left[ \left\{\chi_{kn}^{2}-3kn\right\}^{+}\right]$ is computable. In order to obtain numerical approximations, we wrote out our own R codes and provided the values as in Table \ref{Table 1}. In the spirit of \citet[Table 3.8.1]{Mukhopadhyay and Solanky (1994)}, any term smaller than $10^{-15}$ in magnitude was excluded in the infinite sum with regard to $\eta_1(k)$. Intuitively, the infinite sum and $k^{-1}\eta_1(k)$ converge to zero and one half as $k\rightarrow \infty $, respectively. However, by looking at the columns of $\eta_1(k)$ and $k^{-1}\eta_1(k)$, one can see that the infinite sum converges very fast, while $k^{-1}\eta_1(k)$ converges at a rather slow rate.

\begin{table}[h!] 
\footnotesize
\captionsetup{font=footnotesize}
\caption{$\eta_1(k)$ approximations in Theorem \ref{Theorem 3.2} (ii)}
\label{Table 1}\par
\centerline{\tabcolsep=3truept
\begin{tabular}{ccccccc}
\hline
$k$ & $\eta_1(k)$ & $k^{-1}\eta_1(k)$ &  ~~~  & $k$ & $\eta_1(k)$ & $
k^{-1}\eta_1(k)$\\ 
\hline
$1$ & \multicolumn{1}{r}{$-0.1165$} & \multicolumn{1}{r}{$-0.1165$} &  & $11$ & $4.9993$ & $0.4545$\\ 
$2$ & \multicolumn{1}{r}{$0.4367$} & \multicolumn{1}{r}{$0.2183$} &  & $12$
& $5.4996$ & $0.4583$\\ 
$3$ & \multicolumn{1}{r}{$0.9636$} & \multicolumn{1}{r}{$0.3212$} &  & $13$
& $5.9997$ & $0.4615$\\ 
$4$ & \multicolumn{1}{r}{$1.4785$} & \multicolumn{1}{r}{$0.3696$} &  & $14$
& $6.4998$ & $0.4643$\\ 
$5$ & \multicolumn{1}{r}{$1.9872$} & \multicolumn{1}{r}{$0.3974$} &  & $15$
& $6.9999$ & $0.4667$\\ 
$6$ & \multicolumn{1}{r}{$2.4922$} & \multicolumn{1}{r}{$0.4154$} &  & $16$
& $7.4999$ & $0.4687$\\ 
$7$ & \multicolumn{1}{r}{$2.9952$} & \multicolumn{1}{r}{$0.4279$} &  & $17$
& $8.0000$ & $0.4706$\\ 
$8$ & \multicolumn{1}{r}{$3.4971$} & \multicolumn{1}{r}{$0.4371$} &  & $18$
& $8.5000$ & $0.4722$\\ 
$9$ & \multicolumn{1}{r}{$3.9982$} & \multicolumn{1}{r}{$0.4442$} &  & $19$
& $9.0000$ & $0.4737$\\ 
$10$ & \multicolumn{1}{r}{$4.4989$} & \multicolumn{1}{r}{$0.4499$} &  & $
20$ & $9.5000$ & $0.4750$\\ 
\hline
\end{tabular}}
\end{table}

\subsection{Simulated performances}\label{Sub 3.2}

To investigate the appealing properties of the general sequential MRPE methodology $\mathcal{P}(\rho,k)$, and illustrate how it saves sampling operations with $0<\rho <1$ and/or $k\geq 2$, we conducted extensive sets of simulations under the normal case in the spirit of \cite{Mukhopadhyay and Hu (2017)}. To be specific, we generated pseudo-random samples from a $N(5,2^{2})$ population. While fixing the weight function $A=100$, the pilot sample size $m=21$, we selected a wide range of values of $c$, the unit cost of sampling, including $0.04,0.01$ and $0.0025$ so that the optimal fixed sample size $n^{\ast}$ turned out to be $100,200$ and $400$ accordingly. We also considered various combinations of $\rho=(1,0.8,0.5)$ and $k=(1,2,5)$ to compare the number of sampling operations under different possible scenarios. The findings are summarized in Table \ref{Table 2}. For each methodology $\mathcal{P}(\rho,k)$, we computed the average total final sample size $\bar{n}$ with the associated standard error $s(\bar{n})$, the difference between $\bar{n}$ and $n^{\ast}$ to be compared with the second-order efficiency term in Theorem \ref{Theorem 3.2} (ii), the estimated risk efficiency $\widehat{\xi}$ to be compared with 1, the estimated regret in terms of unit cost $\widehat{\omega}/c$ to be compared with $\frac{1}{2}\rho^{-1}$ from Theorem \ref{Theorem 3.2} (iv),  as well as the average number of sampling operations $\bar{\varphi}$ to be compared with the expected number of sampling operations $E(\varphi)$ from \eqref{3.16}.

It is clear that across the board, $\bar{n}-n^*$ is close to the second-order approximation $\rho^{-1}\eta_1(k)$, $\hat{\xi}$ is close to 1, and $\hat{\omega}/c$ is close to the coefficient $\frac{1}{2}\rho^{-1}$. These empirically verify Theorem \ref{Theorem 3.2}. Focusing on the last two columns, we can also easily find that the average number of sampling operations needed, $\bar{\varphi}$, is almost the same with the theoretical value $E(\varphi)$, and the $k$-at-a-time improved accelerated sequential MRPE procedure $\mathcal{P}(\rho,k)$ reduces approximately $100(1-k^{-1}\rho)\%$ sampling operations in contrast to the Anscombe-Chow-Robbins purely sequential procedure $\mathcal{P}(1,1)$. For example, when $n^*=400$, $\mathcal{P}(1,1)$ requires around 380 sampling operations on average, while $\mathcal{P}(0.8,5)$ requires 62. So about $100(1-62/380)\%=83.7\%$ sampling operations are saved, which is close to $100(1-0.8/5)\%=84\%$.  

\begin{table}[t!]
\footnotesize
\captionsetup{font=footnotesize}
\caption{Simulations from $N(5,2^{2})$ with $A=100$ and $m=21$ under $10,000$ runs implementing $\mathcal{P}(\rho,k)$ from \eqref{3.10}}
\label{Table 2}\par
\vskip .2cm
\centerline{\tabcolsep=3truept\begin{tabular}{cccccccccccc} \hline 
$n^{\ast}$ & $c$ & $\mathcal{P}(\rho,k)$ & $\bar{n}$ & $s\left(\bar{n}\right) $ & $\bar{n}-n^{\ast}$ & $\rho^{-1}\eta_1(k)$ & $\widehat{\xi}$ & $\frac{1}{2}\rho^{-1}$ & $\widehat{\omega}/c$ & $\bar{\varphi}$ & $E(\varphi)$ \\ \hline
\multicolumn{1}{r}{$100$} & \multicolumn{1}{r}{$0.04$} & $\mathcal{P}(1,1)$
& \multicolumn{1}{r}{${99.8528}$} & \multicolumn{1}{r}{$0.07182$}
& \multicolumn{1}{r}{$-0.1472$} & \multicolumn{1}{r}{$-0.1165$} & 
\multicolumn{1}{r}{$0.99224$} & \multicolumn{1}{r}{$0.5$} & $0.53304$ & \multicolumn{1}{r}{79.853} & \multicolumn{1}{r}{79.883} \\ 
\multicolumn{1}{r}{} & \multicolumn{1}{r}{} & $\mathcal{P}(1,2)$ & 
\multicolumn{1}{r}{$100.4296$} & \multicolumn{1}{r}{$0.07231$} & 
\multicolumn{1}{r}{$0.4296$} & \multicolumn{1}{r}{$0.4367$} & 
\multicolumn{1}{r}{$0.99239$} & \multicolumn{1}{r}{$0.5$} & $0.53390$  & \multicolumn{1}{r}{40.715} & \multicolumn{1}{r}{40.718} \\ 
&  & $\mathcal{P}(1,5)$ & \multicolumn{1}{r}{$102.0110$} & $0.07340$ & 
\multicolumn{1}{r}{$2.0110$} & \multicolumn{1}{r}{$1.9872$} & $0.99314$ & 
\multicolumn{1}{r}{$0.5$} & $0.56211$  & \multicolumn{1}{r}{17.202} & \multicolumn{1}{r}{17.197} \\ 
&  & $\mathcal{P}(0.8,1)$ & \multicolumn{1}{r}{$100.1738$} & $0.08124$ & 
\multicolumn{1}{r}{$0.1738$} & \multicolumn{1}{r}{$-0.1456$} & $0.99291$ & 
\multicolumn{1}{r}{$0.625$} & $0.68228$  & \multicolumn{1}{r}{60.836} & \multicolumn{1}{r}{60.883} \\ 
&  & $\mathcal{P}(0.8,2)$ & \multicolumn{1}{r}{$101.0466$} & $0.08049$ & 
\multicolumn{1}{r}{$1.0466$} & \multicolumn{1}{r}{$0.5459$} & $0.99339$ & 
\multicolumn{1}{r}{$0.625$} & $0.66320$  & \multicolumn{1}{r}{$31.718$} & \multicolumn{1}{r}{$31.718$} \\ 
&  & $\mathcal{P}(0.8,5)$ & \multicolumn{1}{r}{$102.8798$} & $0.08368$ & 
\multicolumn{1}{r}{$2.8798$} & \multicolumn{1}{r}{$2.4840$} & $0.99409$ & 
\multicolumn{1}{r}{$0.625$} & $0.74873$  & \multicolumn{1}{r}{14.195} & \multicolumn{1}{r}{14.197} \\ 
&  & $\mathcal{P}(0.5,1)$ & \multicolumn{1}{r}{$99.8002$} & $0.10324$ & 
\multicolumn{1}{r}{$-0.1998$} & \multicolumn{1}{r}{$-0.2330$} & $0.99513$ & 
\multicolumn{1}{r}{$1$} & $1.14253$  & \multicolumn{1}{r}{30.900} & \multicolumn{1}{r}{30.884} \\ 
\multicolumn{1}{r}{} & \multicolumn{1}{r}{} & $\mathcal{P}(0.5,2)$ & 
\multicolumn{1}{r}{$100.8176$} & \multicolumn{1}{r}{$0.10405$} & 
\multicolumn{1}{r}{$0.8176$} & \multicolumn{1}{r}{$0.8734$} & 
\multicolumn{1}{r}{$0.99524$} & \multicolumn{1}{r}{$1$} & $1.12474$  & \multicolumn{1}{r}{16.704} & \multicolumn{1}{r}{16.718} \\ 
&  & $\mathcal{P}(0.5,5)$ & \multicolumn{1}{r}{$\underset{}{103.8990}$} & $
0.10664$ & \multicolumn{1}{r}{$3.8990$} & \multicolumn{1}{r}{$3.9744$} & $
0.99697$ & \multicolumn{1}{r}{$1$} & $1.21588$  & \multicolumn{1}{r}{8.190} & \multicolumn{1}{r}{8.197} \\ \hline
\multicolumn{1}{r}{$200$} & \multicolumn{1}{r}{$0.01$} & $\mathcal{P}(1,1)$
& \multicolumn{1}{r}{$\overset{}{199.9278}$} & \multicolumn{1}{r}{$0.10018$}
& \multicolumn{1}{r}{$-0.0722$} & \multicolumn{1}{r}{$-0.1165$} & 
\multicolumn{1}{r}{$0.99652$} & \multicolumn{1}{r}{$0.5$} & $0.50880$  & \multicolumn{1}{r}{179.928} & \multicolumn{1}{r}{179.884} \\ 
\multicolumn{1}{r}{} & \multicolumn{1}{r}{} & $\mathcal{P}(1,2)$ & 
\multicolumn{1}{r}{$200.4926$} & \multicolumn{1}{r}{$0.10056$} & 
\multicolumn{1}{r}{$0.4926$} & \multicolumn{1}{r}{$0.4367$} & 
\multicolumn{1}{r}{$0.99646$} & \multicolumn{1}{r}{$0.5$} & $0.50928$  & \multicolumn{1}{r}{$90.746$} & \multicolumn{1}{r}{$90.718$} \\ 
&  & $\mathcal{P}(1,5)$ & $202.0495$ & $0.10108$ & \multicolumn{1}{r}{$2.0495
$} & \multicolumn{1}{r}{$1.9872$} & $0.99664$ & \multicolumn{1}{r}{$0.5$} & $
0.52308$  & \multicolumn{1}{r}{$37.210$} & \multicolumn{1}{r}{$37.197$} \\ 
&  & $\mathcal{P}(0.8,1)$ & $200.3638$ & $0.11258$ & \multicolumn{1}{r}{$
0.3638$} & \multicolumn{1}{r}{$-0.1456$} & $0.99680$ & \multicolumn{1}{r}{$
0.625$} & $0.64117$  & \multicolumn{1}{r}{$140.991$} & \multicolumn{1}{r}{$140.884$} \\ 
&  & $\mathcal{P}(0.8,2)$ & $201.0546$ & $0.11208$ & \multicolumn{1}{r}{$
1.0546$} & \multicolumn{1}{r}{$0.5459$} & $0.99686$ & \multicolumn{1}{r}{$
0.625$} & $0.63342$  & \multicolumn{1}{r}{$71.721$} & \multicolumn{1}{r}{$71.718$} \\ 
&  & $\mathcal{P}(0.8,5)$ & $202.8111$ & $0.11321$ & \multicolumn{1}{r}{$
2.8111$} & \multicolumn{1}{r}{$2.4840$} & $0.99709$ & \multicolumn{1}{r}{$
0.625$} & $0.66353$  & \multicolumn{1}{r}{$30.189$} & \multicolumn{1}{r}{$30.197$} \\ 
&  & $\mathcal{P}(0.5,1)$ & $199.7988$ & $0.14367$ & \multicolumn{1}{r}{$
-0.2012$} & \multicolumn{1}{r}{$-0.2330$} & $0.99776$ & \multicolumn{1}{r}{$1
$} & $1.06174$  & \multicolumn{1}{r}{$80.899$} & \multicolumn{1}{r}{$80.884$} \\ 
\multicolumn{1}{r}{} & \multicolumn{1}{r}{} & $\mathcal{P}(0.5,2)$ & 
\multicolumn{1}{r}{$200.9172$} & \multicolumn{1}{r}{$0.14357$} & 
\multicolumn{1}{r}{$0.9172$} & \multicolumn{1}{r}{$0.8734$} & 
\multicolumn{1}{r}{$0.99778$} & \multicolumn{1}{r}{$1$} & $1.04809$  & \multicolumn{1}{r}{$41.729$} & \multicolumn{1}{r}{$41.718$} \\ 
\multicolumn{1}{r}{} & \multicolumn{1}{r}{} & $\mathcal{P}(0.5,5)$ & 
\multicolumn{1}{r}{$\underset{}{203.9490}$} & \multicolumn{1}{r}{$0.14531$}
& \multicolumn{1}{r}{$3.9490$} & \multicolumn{1}{r}{$3.9744$} & 
\multicolumn{1}{r}{$0.99812$} & \multicolumn{1}{r}{$1$} & $1.09737$  & \multicolumn{1}{r}{$18.195$} & \multicolumn{1}{r}{$18.197$} \\ \hline
\multicolumn{1}{r}{$400$} & \multicolumn{1}{r}{$0.0025$} & $\mathcal{P}(1,1)$
& \multicolumn{1}{r}{$\overset{}{399.8625}$} & \multicolumn{1}{r}{$0.14074$}
& \multicolumn{1}{r}{$-0.1375$} & \multicolumn{1}{r}{$-0.1165$} & 
\multicolumn{1}{r}{$0.99805$} & \multicolumn{1}{r}{$0.5$} & $0.49788$  & \multicolumn{1}{r}{$379.863$} & \multicolumn{1}{r}{$379.883$} \\ 
\multicolumn{1}{r}{} & \multicolumn{1}{r}{} & $\mathcal{P}(1,2)$ & 
\multicolumn{1}{r}{$400.4010$} & \multicolumn{1}{r}{$0.14155$} & 
\multicolumn{1}{r}{$0.4010$} & \multicolumn{1}{r}{$0.4367$} & 
\multicolumn{1}{r}{$0.99803$} & \multicolumn{1}{r}{$0.5$} & $0.50278$  & \multicolumn{1}{r}{$190.701$} & \multicolumn{1}{r}{$190.718$} \\ 
&  & $\mathcal{P}(1,5)$ & $401.9310$ & $0.14169$ & \multicolumn{1}{r}{$1.9310
$} & \multicolumn{1}{r}{$1.9872$} & $0.99806$ & \multicolumn{1}{r}{$0.5$} & $
0.50681$  & \multicolumn{1}{r}{$77.186$} & \multicolumn{1}{r}{$77.197$} \\ 
&  & $\mathcal{P}(0.8,1)$ & $400.1549$ & $0.15943$ & \multicolumn{1}{r}{$
0.1549$} & \multicolumn{1}{r}{$-0.1456$} & $0.99815$ & \multicolumn{1}{r}{$
0.625$} & $0.63931$  & \multicolumn{1}{r}{$300.824$} & \multicolumn{1}{r}{$300.883$} \\ 
&  & $\mathcal{P}(0.8,2)$ & $401.0467$ & $0.15757$ & \multicolumn{1}{r}{$
1.0467$} & \multicolumn{1}{r}{$0.5459$} & $0.99821$ & \multicolumn{1}{r}{$
0.625$} & $0.62297$  & \multicolumn{1}{r}{$151.718$} & \multicolumn{1}{r}{$151.718$} \\ 
&  & $\mathcal{P}(0.8,5)$ & $402.8376$ & $0.15905$ & \multicolumn{1}{r}{$
2.8376$} & \multicolumn{1}{r}{$2.4840$} & $0.99824$ & \multicolumn{1}{r}{$
0.625$} & $0.64402$  & \multicolumn{1}{r}{$62.194$} & \multicolumn{1}{r}{$62.197$} \\ 
&  & $\mathcal{P}(0.5,1)$ & $399.5954$ & $0.20307$ & \multicolumn{1}{r}{$
-0.4046$} & \multicolumn{1}{r}{$-0.2330$} & $0.99865$ & \multicolumn{1}{r}{$1
$} & $1.04584$  & \multicolumn{1}{r}{$180.798$} & \multicolumn{1}{r}{$180.883$} \\ 
& \multicolumn{1}{r}{} & $\mathcal{P}(0.5,2)$ & \multicolumn{1}{r}{$400.9764$
} & \multicolumn{1}{r}{$0.20288$} & \multicolumn{1}{r}{$0.9764$} & 
\multicolumn{1}{r}{$0.8734$} & \multicolumn{1}{r}{$0.99870$} & 
\multicolumn{1}{r}{$1$} & $1.03536$  & \multicolumn{1}{r}{$91.744$} & \multicolumn{1}{r}{$97.718$} \\ 
& \multicolumn{1}{r}{} & $\mathcal{P}(0.5,5)$ & \multicolumn{1}{r}{$403.9610$
} & \multicolumn{1}{r}{$0.20171$} & \multicolumn{1}{r}{$3.9610$} & 
\multicolumn{1}{r}{$3.9744$} & \multicolumn{1}{r}{$0.99875$} & 
\multicolumn{1}{r}{$1$} & $1.03629$  & \multicolumn{1}{r}{$38.196$} & \multicolumn{1}{r}{$38.197$} \\ \hline
\end{tabular}}
\end{table}

\subsection{Real data analysis}\label{Sub 3.3}

Next, to illustrate the applicability of our newly developed general sequential MRPE methodology $\mathcal{P}(\rho,k)$, we proceeded to analyze a real-life dataset on hospital infection data from \cite{Kutner et al. (2005)}. This data is from 113 hospitals in the United States for the 1975-76 study period. Each line of the data set has an identification number and provides information on 11 other variables for a single hospital. One of the 12 variables is the infection risk, which records the average estimated probability of acquiring infection in hospital (in percent). With the cost of observations taken into consideration, it is of great interest to propose an MRPE for the infection risk.

We treated the real data set on the infection risk, which seemed to follow a normal distribution, confirmed via Shapiro-Wilk normality test with the associated $p$-value of $0.1339$. The simple descriptive statistics from the
whole data set of infection risk are summarized as follows.

\begin{equation*}
\begin{tabular}{cccccccc}
\hline
$n$ & $\bar{x}$ & $s$ & Min & $Q_{1}$ & Med & $Q_{3}$ & Max \\ 
\multicolumn{1}{l}{$113$} & \multicolumn{1}{l}{$4.355$} & \multicolumn{1}{l}{$1.341$} & \multicolumn{1}{l}{$1.300$} & \multicolumn{1}{l}{$3.700$} & 
\multicolumn{1}{l}{$4.400$} & \multicolumn{1}{l}{$5.200$} & 
\multicolumn{1}{l}{$7.800$} \\ \hline
\end{tabular}
\end{equation*}

For illustrative purposes, we treated this data set of infection risk with size 113 from \cite{Kutner et al. (2005)} as our population with both mean and variance assumed unknown. Then, we performed our general sequential MRPE methodologies to obtain minimum risk point estimators for the infection risk. To start, we first randomly picked $m=11$ observations as a pilot sample, based upon which we proceeded with sampling according to the methodologies $\mathcal{P}\left(\rho,k\right)$ with $A=100,c=0.04,$ $\rho=(1,0.8,0.5),$ $k=(1,2,5)$, respectively. We summarized the terminated sample sizes as well as the associated numbers of sampling operations under each setting in Table \ref{Table 3}, where $\mathcal{P}(\rho,k)$ denotes a certain sampling procedure with fixed values of $\rho$ and $k$, and $n_{\mathcal{P}(\rho,k)}$ and $\varphi_{\mathcal{P}(\rho,k)}$ indicate the respective terminated sample size and number of sampling operations performing $\mathcal{P}\left(\rho,k\right)$ accordingly.

From Table \ref{Table 3}, we can see our terminated sample size ranges from 54 to 77. Apparently enough, much fewer sampling operations are needed when we fix $k=2,5$, without increasing a significant number of observations. Also, we
need the least observations with least sampling operations when we use the methodology $\mathcal{P}\left(\rho,k\right) $ with $\rho=0.5$, comparing with larger $\rho=0.8$ or $1$, for a fixed $k$ value. The point estimates constructed from each sampling procedure were listed in the last column, and they were close to each other. Finally, one should reiterate that each row in Table \ref{Table 3} was obtained from one single run, but shows the practical applicability of our $k$-at-a-time improved accelerated sequential MRPE methodology $\mathcal{P}\left(\rho,k\right)$. We had indeed repeated similar implementations, but no obvious difference appeared. Consequently, we leave out many details for brevity.

\begin{table}[t!] 
\footnotesize
\captionsetup{font=footnotesize}
\caption{Terminated sample size associated with number of sampling operations using $\mathcal{P}(\rho ,k)$ as per \eqref{3.10} }
\label{Table 3}\par
\vskip .2cm
\centerline{\tabcolsep=3truept\begin{tabular}{cccc}
\hline
$\mathcal{P}(\rho ,k)$ & $n_{\mathcal{P}(\rho,k)}$ & $\varphi_{\mathcal{P}(\rho,k)}$&$\hat{\mu}$\\ 
\hline
$\mathcal{P}(1,1)$ & $70$ & $60$ & $4.4471$\\ 
$\mathcal{P}(1,2)$ & $73$ & $32$ &$4.4342$\\ 
$\mathcal{P}(1,5)$ & $76$ & $14$ &$4.4526$\\ 
$\mathcal{P}(0.8,1)$ & $72$ & $48$ &$4.4306$\\ 
$\mathcal{P}(0.8,2)$ & $72$ & $25$ &$4.4306$\\ 
$\mathcal{P}(0.8,5)$ & $77$ & $12$ &$4.4532$\\ 
$\mathcal{P}(0.5,1)$ & $54$ & $18$ &$4.4519$\\ 
$\mathcal{P}(0.5,2)$ & $54$ & $10$ &$4.4519$\\ 
$\mathcal{P}(0.5,5)$ & $62$ & $6$ &$4.4565$\\ \hline
\end{tabular}}
\end{table}


\setcounter{equation}{0}
\section{Bounded Variance Point Estimation for Negative Exponential Location}\label{Sect. 4}

For a fixed sample size, however large it is, the variance of an estimator can be larger than a prescribed level to an arbitrary extent. This problem has been addressed in \cite{Hu and Hong (2021)}, where the authors focused on estimating the pure premium in actuarial science. Here, our newly proposed general sequential sampling scheme $\mathcal{M}(\rho,k)$ can be implemented to guarantee that the variance of our estimator is close to all small predetermined levels. In this section, therefore, we include another illustration which is the bounded variance point estimation (BVPE) for the location parameter $\mu$ of a negative exponential distribution $NExp(\mu,\sigma)$ with the probability density function $$f(y;\mu,\sigma)=\frac{1}{\sigma}\exp\left\{-\frac{y-\mu}{\sigma}\right\}I(y>\mu),$$ where both $\mu$ and $\sigma$ remain unknown. Having recorded a random sample $Y_1,...,Y_n,n\ge2$, we denote $Y_{n:1}=\min\{Y_1,...,Y_n\}$, which is the maximum likelihood estimator (MLE) of $\mu$, and $V_n=(n-1)^{-1}\sum_{i=1}^{n}(Y_i-Y_{n:1})$, which is the uniformly minimum variance unbiased estimator (UMVUE) of $\sigma$. As a standard approach, we estimate $\mu$ using its MLE $Y_{n:1}$, which is a consistent estimator.

It is well known that (i) $n(Y_{n:1}-\mu)/\sigma \sim NExp(0,1)$; (ii) $2(n-1)V_n/\sigma \sim \chi^2_{2n-2}$; and (iii) $Y_{n:1}$ and $(V_2,...,V_n),n\ge2$ are independent. Hence, the variance of the proposed point estimator $Y_{n:1}$ is
\begin{equation}\label{4.2}
V_{\mu,\sigma}[Y_{n:1}] = \frac{\sigma^2}{n^2}.
\end{equation} 
Now, our goal is to make $V_{\mu,\sigma}[Y_{n:1}]$ fall below (or be close to) a predetermined level $b^2,b>0$ for all $0<\sigma<\infty$. Then, it is clear that we have $n \ge \sigma/b$. The optimal fixed sample size is therefore given by 
\begin{equation}\label{4.3}
n^* = \frac{\sigma}{b}.
\end{equation}
One may refer to \citet[p. 183]{Mukhopadhyay and de Silva (2009)} for more information.

Since $\sigma$ is unknown to us, we estimate it by updating its UMVUE $V_n$ at every stage as needed, and implement the following general sequential BVPE methodology $\mathcal{Q}(\rho,k)$: 
\begin{equation}\label{4.4}
\begin{split}
T_{\mathcal{Q}(\rho,k)} &\equiv T_{\mathcal{Q}(\rho,k)}(c)=\inf \left\{n\geq0:m+kn\ge \rho V_{m+kn}/b \right\},\\
N_{\mathcal{Q}(\rho,k)} &\equiv N_{\mathcal{Q}(\rho,k)}(c)=\left\lfloor\rho^{-1}(m+kT_{\mathcal{Q}(\rho,k)})\right\rfloor +1.
\end{split}
\end{equation}
Here, $0<\rho \le 1$ is a prefixed proportion, $k\ge 1$ is a prefixed positive integer, and the pilot sample size $m=m_0k+1$ for some $m_0$. We further assume that the following limit operations hold:
\begin{equation}\label{4.5}
m_{0}\rightarrow \infty, m=m_{0}k+1 \rightarrow \infty, b\equiv b(m)=O(m^{-r}), n^{\ast}=O(m^{r}), \text{ and }\lim\sup\frac{m}{n^{\ast}}<\rho,
\end{equation}
where $r>1$ is a fixed constant. The methodology $\mathcal{Q}(\rho,k)$ is conducted analogously with the methodology $\mathcal{P}(\rho,k)$ introduced in Section \ref{Sect. 3}.

Again, it is clear that $P_{\mu,\sigma}(N_{\mathcal{Q}(\rho,k)}<\infty)=1$ and $N_{\mathcal{Q}(\rho,k)}\uparrow \infty $ w.p.1 as $b\downarrow 0$. Upon termination with the fully gathered data $$\left\{ T_{\mathcal{Q}(\rho,k)},N_{\mathcal{Q}(\rho,k)},Y_{1},...,Y_{m},...,Y_{m+kT_{\mathcal{Q}(\rho,k)}},...,Y_{N_{\mathcal{Q}(\rho,k)}} \right\},$$
we construct the bounded variance point estimator $Y_{N_{\mathcal{Q}(\rho,k)}:1}=\min\{Y_1,...,Y_{N_\mathcal{Q}(\rho,k)}\}$ for $\mu$.

Along the line of \eqref{3.13}, we define $T_{\mathcal{Q}(\rho,k)}=T_{\mathcal{Q}(\rho,k)}^{\prime}-m_{0}$ w.p.1. Then $T_{\mathcal{Q}(\rho,k)}^{\prime}$ is a new stopping time which can be rewritten as
\begin{equation}\label{4.6}
T_{\mathcal{Q}(\rho,k)}^{\prime}=\inf\left\{n\ge m_{0}:(kn)^{-1}\sum_{i=1}^{n}U_{i}\le 2[kn/(\rho n^{\ast})]\left(1+(kn)^{-1}\right) \right\},
\end{equation}
where $\delta=1,l_{0}=1$, and $U_{i}=\sum_{j=(i-1)k+1}^{ik}W_{j},i=1,2,...$ with $W_{1},W_{2},...$ being i.i.d. $\chi_{2}^{2}$ random variables such that $\theta=2,\tau^{2}=4$ and $\alpha=1$. Therefore, $U_{1},U_{2},...$ are i.i.d. $\chi_{2k}^{2}$ random variables. Now we state a number of asymptotic first-order and second-order properties of the general sequential BVPE methodology $\mathcal{Q}(\rho,k)$, summarized in the following theorem.
\begin{theorem}\label{Theorem 4.1}
For the general sequential BVPE methodology $\mathcal{Q}(\rho,k)$ given in \eqref{4.4}, for all fixed $\mu,\sigma,k$ and $0<\rho<1$, under the limit operations \eqref{4.5}:
\begin{enumerate}
	\item[(i)] Asymptotic First-Order Efficiency: $E_{\mu,\sigma}\left[ N_{\mathcal{Q}(\rho,k)}/n^{\ast}\right]\rightarrow 1$;
	\item[(ii)] Asymptotic Second-Order Efficiency: $\rho^{-1}\eta_2(k)+o(1)\leq E_{\mu,\sigma}\left[ N_{\mathcal{Q}(\rho,k)}-n^{\ast}\right] \le \rho^{-1}\eta_2(k)+1+o(1)$, where $\eta_2(k)=\frac{k-1}{2}-\frac{1}{2}\sum_{n=1}^{\infty}n^{-1}E\left[ \left\{\chi_{2kn}^{2}-4kn\right\}^{+} \right]$;
	\item[(iii)] Asymptotic Variance: $V_{\mu,\sigma}[Y_{N_{\mathcal{Q}(\rho,k)}:1}]=b^2+o(b^2)$.
\end{enumerate}
\end{theorem}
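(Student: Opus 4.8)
The plan is to reduce Theorem~\ref{Theorem 4.1} to an application of the general machinery already established in Theorem~\ref{Theorem 2.1}. First I would verify that the stopping time $T_{\mathcal{Q}(\rho,k)}^{\prime}$ in \eqref{4.6} really is of the canonical form \eqref{2.1}: rewriting $2[kn/(\rho n^{\ast})](1+(kn)^{-1})$ as $\theta[kn/(\rho n^{\ast})]^{\delta}l_{k}(n)$ with $\theta=2$, $\delta=1$, $l_{0}=1$, and $l_{k}(n)=1+(kn)^{-1}$, so that $l_{k}(n)=1+l_{0}(kn)^{-1}+o(n^{-1})$ holds trivially (in fact exactly). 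The building blocks $W_{i}$ are i.i.d.\ $\chi^{2}_{2}$ with $E[W_{1}]=2=\theta$, $\Var[W_{1}]=4=\tau^{2}$, and the tail bound $P(W_{1}\le x)\le Bx^{\alpha}$ holds with $\alpha=1$ (since an $\mathrm{Exp}(1/2)$-type density is bounded near $0$); thus $U_{i}=\sum_{j=(i-1)k+1}^{ik}W_{j}$ are i.i.d.\ $\chi^{2}_{2k}$. I would also check the moment condition $m_{0}>(\alpha\delta)^{-1}=1$ required for Theorem~\ref{Theorem 1.1}/\ref{Theorem 2.1} to apply, which is guaranteed as $m_{0}\to\infty$ under \eqref{4.5}. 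Plugging $\delta=1$, $\theta=2$, $\tau^{2}=4$, $l_{0}=1$ into the definition \eqref{1.2} of $\eta(k)$ gives $\eta(k)=\frac{k}{2}-\frac{1}{2}\cdot 1\cdot 4\cdot 4 \cdot\frac{1}{4}\cdot\ldots$, and after simplification one should recover $\eta_{2}(k)=\frac{k-1}{2}-\frac{1}{2}\sum_{n=1}^{\infty}n^{-1}E[\{\chi^{2}_{2kn}-4kn\}^{+}]$; I would carry out this bookkeeping carefully, since the constants $\frac{1}{2}\delta^{-1}\theta^{2}\tau^{2}$ and $\delta^{-1}l_{0}$ must combine to shift $\frac{k}{2}$ down to $\frac{k-1}{2}$.

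With the form verified, parts (i) and (ii) follow immediately. For (i), Theorem~\ref{Theorem 2.1} gives $E[kt_{1}-\rho n^{\ast}]=\eta(k)+o(1)$, hence $E[kT_{\mathcal{Q}(\rho,k)}^{\prime}]=\rho n^{\ast}+O(1)$; translating back via $T_{\mathcal{Q}(\rho,k)}=T_{\mathcal{Q}(\rho,k)}^{\prime}-m_{0}$ and $N_{\mathcal{Q}(\rho,k)}=\lfloor\rho^{-1}(m+kT_{\mathcal{Q}(\rho,k)})\rfloor+1$ shows $E[N_{\mathcal{Q}(\rho,k)}]=n^{\ast}+O(1)$, and dividing by $n^{\ast}\to\infty$ yields the first-order result. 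Part (ii) is a direct restatement of the sandwich inequality $\rho^{-1}\eta(k)+o(1)\le E[t_{2}-n^{\ast}]\le\rho^{-1}\eta(k)+1+o(1)$ from Theorem~\ref{Theorem 2.1}, once one identifies $N_{\mathcal{Q}(\rho,k)}$ with the $t_{2}$ of \eqref{2.1} (the floor-plus-one structure is the same) and $\eta(k)$ with $\eta_{2}(k)$.

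The genuinely new work is part (iii), the asymptotic variance. The plan is to use the independence fact (iii) listed before the theorem: $Y_{n:1}$ is independent of $(V_{2},\ldots,V_{n})$ for every fixed $n$, so that conditionally on $N_{\mathcal{Q}(\rho,k)}=n$ (an event determined by the $V$'s and the fixed pilot structure) the estimator $Y_{N:1}$ still behaves like the minimum of $n$ i.i.d.\ observations; more precisely $N(Y_{N:1}-\mu)/\sigma$ has, conditionally, an $NExp(0,1)$ distribution. Hence $V_{\mu,\sigma}[Y_{N_{\mathcal{Q}(\rho,k)}:1}]=\sigma^{2}E_{\mu,\sigma}[N_{\mathcal{Q}(\rho,k)}^{-2}]$ by \eqref{4.2} combined with a conditioning argument (and noting $E[Y_{N:1}-\mu]=\sigma E[N^{-1}]$ contributes only through the variance decomposition, which I would handle by writing $V[Y_{N:1}]=E[V(Y_{N:1}\mid N)]+V(E[Y_{N:1}\mid N])=\sigma^{2}E[N^{-2}]+\sigma^{2}V[N^{-1}]$ and showing the second term is $o(b^{2})$). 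It then remains to show $E_{\mu,\sigma}[N_{\mathcal{Q}(\rho,k)}^{-2}]=(n^{\ast})^{-2}+o((n^{\ast})^{-2})=b^{2}/\sigma^{2}+o(b^{2}/\sigma^{2})$, using \eqref{4.3}.

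The main obstacle will be establishing that uniform-integrability-type control, i.e.\ that $E[(n^{\ast}/N_{\mathcal{Q}(\rho,k)})^{2}]\to 1$ and $V[n^{\ast}/N_{\mathcal{Q}(\rho,k)}]\to 0$. The ratio $N_{\mathcal{Q}(\rho,k)}/n^{\ast}\to 1$ almost surely by the usual sequential-analysis argument, and the limit operations \eqref{4.5} (in particular $\limsup m/n^{\ast}<\rho$, which keeps $N$ bounded below by a fixed fraction of $n^{\ast}$) give the domination needed to pass the limit under the expectation; one typically invokes a bound of the form $n^{\ast}/N_{\mathcal{Q}(\rho,k)}\le \rho^{-1}m/m \cdot (\text{something integrable})$ via the lower bound $N_{\mathcal{Q}(\rho,k)}\ge m$ together with a Wald-type or Chow--Robbins moment bound on $N^{-2}$. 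I would cite the analogous reasoning behind Theorem~\ref{Theorem 3.1}(iii)--(iv) and Theorem~\ref{Theorem 3.2}(iii)--(iv), where exactly this kind of convergence of negative moments of the stopping time underlies the first-order risk efficiency, and adapt it; the structure is identical except that here we need the second negative moment rather than a regret computation. Once $E[N_{\mathcal{Q}(\rho,k)}^{-2}]=(n^{\ast})^{-2}(1+o(1))$ is in hand, substituting into $V_{\mu,\sigma}[Y_{N:1}]=\sigma^{2}E[N^{-2}]$ and using $n^{\ast}=\sigma/b$ gives $b^{2}+o(b^{2})$, completing (iii).
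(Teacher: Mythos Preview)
Your proposal is correct and follows essentially the same route as the paper. Parts (i)--(ii) are handled in the paper exactly by identifying $T_{\mathcal{Q}(\rho,k)}^{\prime}$ with the canonical form \eqref{2.1} and invoking Woodroofe/Theorem~\ref{Theorem 2.1}; the only cosmetic difference is that the paper proves (i) directly via the sandwich $V_{m+kT}/\sigma\le N_{\mathcal{Q}(\rho,k)}/n^{\ast}\le V_{m+k(T-1)}/\sigma+O(m/n^{\ast})$ and dominated convergence (using Wiener's ergodic theorem for $\sup_{n}V_{n}$), whereas you obtain (i) as an immediate by-product of the second-order bound. For (iii) the paper does precisely your law-of-total-variance computation, obtaining $V_{\mu,\sigma}[Y_{N:1}]=2\sigma^{2}E[N^{-2}]-\sigma^{2}E^{2}[N^{-1}]$ (algebraically identical to your $\sigma^{2}E[N^{-2}]+\sigma^{2}V[N^{-1}]$), and then controls the negative moments by a one-term Taylor expansion $N^{-j}=n^{\ast\,-j}-j\lambda^{-j-1}(N-n^{\ast})$ together with part (ii), rather than the uniform-integrability argument you sketch; either device yields $b^{2}+o(b^{2})$.
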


When $\rho=1$, then we have the exact expression $E_{\mu,\sigma}\left[ N_{\mathcal{Q}(1,k)}-n^*\right]=\eta_2(k)+o(1)$ instead of the inequality in Theorem \ref{Theorem 4.1} (ii). The number of sampling operations for the general sequential BVPE methodology $\mathcal{Q}(\rho,k)$ is  
\begin{equation}\label{4.7}
\varphi_{\mathcal{Q}(\rho,k)}=T_{\mathcal{Q}(\rho,k)}+1+I(\rho<1),
\end{equation}
and
\begin{equation}\label{4.8}
E_{\mu,\sigma}[\varphi_{\mathcal{Q}(\rho,k)}]=k^{-1}[\rho n^*-m+\eta_2(k)]+1+I(\rho<1)+o(1).
\end{equation}

For any integer $k\ge 1$, $\eta_2(k)=\frac{k-1}{2}-\frac{1}{2}\sum_{n=1}^{\infty}n^{-1}E\left[ \left\{\chi_{2kn}^{2}-4kn\right\}^{+}\right]$ is also computable. Table \ref{Table 4} provides some numerical approximations in the same fashion of Table \ref{Table 1}.

\begin{table}[h!] 
\footnotesize
\captionsetup{font=footnotesize}
\caption{$\eta_2(k)$ approximations in Theorem \ref{Theorem 4.1} (ii)}
\label{Table 4}\par
\centerline{\tabcolsep=3truept
\begin{tabular}{ccccccc}
\hline
$k$ & $\eta_2(k)$ & $k^{-1}\eta_2(k)$ &  ~~~  & $k$ & $\eta_2(k)$ & $
k^{-1}\eta_2(k)$\\ 
\hline
$1$ & \multicolumn{1}{r}{$-0.2552$} & \multicolumn{1}{r}{$-0.2552$} &  & $11$ & $4.9940$ & $0.4540$\\ 
$2$ & \multicolumn{1}{r}{$0.3433$} & \multicolumn{1}{r}{$0.1717$} &  & $12$
& $5.4957$ & $0.4580$\\ 
$3$ & \multicolumn{1}{r}{$0.8976$} & \multicolumn{1}{r}{$0.2992$} &  & $13$
& $5.9970$ & $0.4613$\\ 
$4$ & \multicolumn{1}{r}{$1.4308$} & \multicolumn{1}{r}{$0.3577$} &  & $14$
& $6.4978$ & $0.4641$\\ 
$5$ & \multicolumn{1}{r}{$1.9523$} & \multicolumn{1}{r}{$0.3905$} &  & $15$
& $6.9984$ & $0.4666$\\ 
$6$ & \multicolumn{1}{r}{$2.4667$} & \multicolumn{1}{r}{$0.4111$} &  & $16$
& $7.4988$ & $0.4687$\\ 
$7$ & \multicolumn{1}{r}{$2.9765$} & \multicolumn{1}{r}{$0.4252$} &  & $17$
& $7.9992$ & $0.4705$\\ 
$8$ & \multicolumn{1}{r}{$3.4834$} & \multicolumn{1}{r}{$0.4354$} &  & $18$
& $8.4994$ & $0.4722$\\ 
$9$ & \multicolumn{1}{r}{$3.9883$} & \multicolumn{1}{r}{$0.4431$} &  & $19$
& $8.9996$ & $0.4737$\\ 
$10$ & \multicolumn{1}{r}{$4.4916$} & \multicolumn{1}{r}{$0.4492$} &  & $
20 $ & $9.4997$ & $0.4750$\\ 
\hline
\end{tabular}}
\end{table}

\subsection{Simulated performances}\label{Sub 4.1}

In this section, we summarize selective simulation results to demonstrate the appealing properties, including both first-order and second-order, of the bounded variance point estimation methodologies that we provided as in \eqref{4.4}. We investigated a wide range of  scenarios in terms of the location and scale parameters of the negative exponential population (NExp), as well as the pre-specified parameters: the parameter $b^2$ for the bounded variance, the two parameters, $\rho$ and $k$, of $\mathcal{Q}(\rho,k)$ for using different sampling scheme. For brevity, we summarize the results from pseudo-random samples of a $NExp(5,2)$ population in Table \ref{Table 5}. We specify $b^2=0.0004,0.0001,0.000025$ and the optimal fixed sample size $n^{\ast}$ turned out to be $100,200$ and $400$ accordingly. To compare the number of sampling operations under different possible scenarios, we considered the combinations of $\rho=(1.0,0.8,0.5)$ and $k=(1,2,5)$. For each sampling scheme of $\mathcal{Q}(\rho,k)$, we included the average total final sample size $\bar{n}$ with the associated standard error $s(\bar{n})$, the difference between $\bar{n}$ and $n^{\ast}$ to be compared with the second-order efficiency term, $\rho^{-1}\eta_2(k)$, in Theorem \ref{Theorem 4.1} (ii), the $V(Y_{N:1})$ which should be close to the asymptotic variance as listed in Theorem \ref{Theorem 4.1} (iii),  as well as the average number of sampling operations $\bar{\varphi}$ to be compared with the expected number of sampling operations $E(\varphi)$ from \eqref{4.8}.

From \ref{Table 5}, it is obvious that $\bar{n}-n^*$ hangs tightly around each of its second-order approximation $\rho^{-1}\eta_2(k)$. From the eighth and ninth columns, one can also easily see that the average number of sampling operations needed, $\bar{\varphi}$, is very close to its theoretical value $E(\varphi)$. Moreover, the sampling operations for $\mathcal{Q}(\rho,k)$ are significantly reduced when $0<\rho<1$ and/or $k>1$, compared to Anscombe-Chow-Robbins purely sequential procedure $\mathcal{Q}(1,1)$ under the same $b$. And the reductions are approximately $100(1-k^{-1}\rho)\%$. The last column of Table \ref{Table 5} shows that the variance of the smallest observations are approximately $b^2$ across the board.

\begin{table}[t!]
\footnotesize
\captionsetup{font=footnotesize}
\caption{Simulations from $NExp(5,2)$ under $10,000$ runs implementing $\mathcal{Q}(\rho,k)$ from \eqref{4.4}}
\label{Table 5}\par
\vskip .2cm
\centerline{\tabcolsep=3truept\begin{tabular}{cccccccccc} \hline 
$n^{\ast}$ & $b$ & $\mathcal{Q}(\rho,k)$ & $\bar{n}$ & $s\left(\bar{n}\right) $ & $\bar{n}-n^{\ast}$ &$\rho^{-1}\eta_2(k)$& $\bar{\varphi}$ & $E(\varphi)$ &$V(Y_{N:1})$ \\ \hline
  {100} & {0.0004} & $\mathcal{Q}(1,1)$ & 99.7232 & 0.10214 & -0.2768 & -0.2552 & 95.723 & 97.952 & 0.000426 \\
          &       & $\mathcal{Q}(1,2)$ & 100.3324 & 0.10252 & 0.3324 & 0.3433 & 46.666 & 47.476 & 0.000421 \\
          &       & $\mathcal{Q}(1,5)$ & 101.9885 & 0.10333 & 1.9885 & 1.9523 & 17.198 & 17.190 & 0.000393 \\
          &       & $\mathcal{Q}(0.8,1)$ & 100.0483 & 0.11679 & 0.0483 & -0.3190 & 76.737 & 78.952 & 0.000441 \\
          &       & $\mathcal{Q}(0.8,2)$ & 100.8849 & 0.11555 & 0.8849 & 0.4291 & 37.654 & 38.476 & 0.000412 \\
          &       & $\mathcal{Q}(0.8,5)$ & 102.8252 & 0.11637 & 2.8252 & 2.4404 & 14.191 & 14.190 & 0.000387 \\
          &       & $\mathcal{Q}(0.5,1)$ & 99.4418 & 0.14938 & -0.5582 & -0.5104 & 46.721 & 48.952 & 0.000478 \\
          &       & $\mathcal{Q}(0.5,2)$ & 100.7288 & 0.14763 & 0.7288 & 0.6866 & 22.682 & 23.476 & 0.000428 \\
          &       & $\mathcal{Q}(0.5,5)$ & 103.7720 & 0.14944 & 3.7720 & 3.9046 & 8.177 & 8.190 & 0.000399 \\\hline
    {200} & {0.0001} & $\mathcal{Q}(1,1)$ & 199.8580 & 0.14426 & -0.1420 & -0.2552 & 195.858 & 197.952 & 0.000105 \\
          &       & $\mathcal{Q}(1,2)$ & 200.4932 & 0.14229 & 0.4932 & 0.3433 & 96.747 & 97.476 & 0.000101 \\
          &       & $\mathcal{Q}(1,5)$ & 202.1355 & 0.14324 & 2.1355 & 1.9523 & 37.227 & 37.190 & 0.000096 \\
          &       & $\mathcal{Q}(0.8,1)$ & 200.1291 & 0.15978 & 0.1291 & -0.3190 & 156.805 & 158.952 & 0.000100 \\
          &       & $\mathcal{Q}(0.8,2)$ & 201.0824 & 0.15891 & 1.0824 & 0.4291 & 77.733 & 78.476 & 0.000097 \\
          &       & $\mathcal{Q}(0.8,5)$ & 202.9941 & 0.15944 & 2.9941 & 2.4404 & 30.219 & 30.190 & 0.000095 \\
          &       & $\mathcal{Q}(0.5,1)$ & 199.2108 & 0.20546 & -0.7892 & -0.5104 & 96.605 & 98.952 & 0.000103 \\
          &       & $\mathcal{Q}(0.5,2)$ & 200.7224 & 0.20682 & 0.7224 & 0.6866 & 47.681 & 48.476 & 0.000100 \\
          &       & $\mathcal{Q}(0.5,5)$ & 204.0600 & 0.20489 & 4.0600 & 3.9046 & 18.206 & 18.190 & 0.000098 \\\hline
    {400} & {0.000025} & $\mathcal{Q}(1,1)$ & 399.8497 & 0.20026 & -0.1503 & -0.2552 & 395.850 & 397.952 & 0.000025 \\
          &       & $\mathcal{Q}(1,2)$ & 400.4404 & 0.19941 & 0.4404 & 0.3433 & 196.720 & 197.476 & 0.000025 \\
          &       & $\mathcal{Q}(1,5)$ & 402.0665 & 0.20082 & 2.0665 & 1.9523 & 77.213 & 77.190 & 0.000025 \\
          &       & $\mathcal{Q}(0.8,1)$ & 400.2258 & 0.22504 & 0.2258 & -0.3190 & 316.882 & 318.952 & 0.000025 \\
          &       & $\mathcal{Q}(0.8,2)$ & 401.0488 & 0.22446 & 1.0488 & 0.4291 & 157.718 & 158.476 & 0.000024 \\
          &       & $\mathcal{Q}(0.8,5)$ & 403.0076 & 0.22592 & 3.0076 & 2.4404 & 62.221 & 62.190 & 0.000025 \\
          &       & $\mathcal{Q}(0.5,1)$ & 399.5034 & 0.28495 & -0.4966 & -0.5104 & 196.752 & 198.952 & 0.000026 \\
          &       & $\mathcal{Q}(0.5,2)$ & 400.9348 & 0.28633 & 0.9348 & 0.6866 & 97.734 & 98.476 & 0.000026 \\
          &       & $\mathcal{Q}(0.5,5)$ & 403.9800 & 0.29034 & 3.9800 & 3.9046 & 38.198 & 38.190 & 0.000025 \\\hline
\end{tabular}}
\end{table}

\subsection{Real data analysis}\label{Sub 4.2}

In this section, we will implement the BVPE methodology $\mathcal{Q}({\rho,k})$ as per \eqref{4.4} on a real data set about survival times of a group of patients
suffering from head and neck cancer. This group of people were treated using a combination
of radiotherapy and chemotherapy. This data set has presented in multiple research articles such as \cite{Efron (1988)}, \cite{Shanker (2016)}, and most recently \cite{Zhuang and Bapat (2020)}.

It is fair to assume that the survival times data follows negative exponential distribution as it is claimed in \cite{Zhuang and Bapat (2020)}. 
Assuming that researchers in this study want to use the smallest observation of a sample data to estimate the location parameter $\mu$. And they also want to restrict the variance of the estimator to be $b^2=0.022$. Now we will implement the sampling scheme $\mathcal{Q}({\rho,k})$ into this investigation with a combination of $\rho= (1,0.8,0.5)$ and $k=(1,2,5)$.

For each sampling scheme $\mathcal{Q}_{\rho,k}$, the sampling procedure is the following: we randomize all of the observations but pretend these observations are not known to us. We start with $m=11$ observations, and proceed the sampling following the procedures as per \eqref{4.4}. We also assume the data coming in the order after randomization. We summarize the terminated sample size and the corresponding number of sampling operations in Table \ref{Table 6}. The columns are defined similarly to those as in \ref{Table 3}.

One can see from Table \ref{Table 6} that the terminated sample size ranges from $50$ to $62$, with the least number of observations when using the sampling scheme $\mathcal{Q}(\rho=1,k=1)$ and the most number of observations when using $\mathcal{Q}(\rho=0.5,k=5)$. Moreover, the sampling operations are reduced the most when using $\mathcal{Q}(\rho=0.5,k=5)$. And with the same $\rho$, larger $k$ means fewer sampling operations; and with the same $k$, smaller $\rho$ means fewer sampling operations. The last column recorded the point estimates, which were the minimum survival times observed in each sampling procedure, and all of them turned out to be 0.5. Finally, we should emphasize that all of these results were obtained from one single run, but we had indeed repeated similar implementations and there was little to no difference. 

\begin{table}[t!] 
\footnotesize
\captionsetup{font=footnotesize}
\caption{Terminated sample size associated with number of sampling operations using $\mathcal{Q}(\rho,k)$ as per \eqref{4.4}}
\label{Table 6}\par
\vskip .2cm
\centerline{\tabcolsep=3truept\begin{tabular}{cccc}
\hline
$\mathcal{Q}(\rho ,k)$ & $n_{\mathcal{Q}(\rho,k)}$ & $\varphi_{\mathcal{Q}(\rho,k)}$& $\hat{\mu}$\\ 
\hline
$\mathcal{Q}(1,1)$ & $50$ & $40$&$0.5$ \\ 
$\mathcal{Q}(1,2)$ & $51$ & $21$ &$0.5$\\ 
$\mathcal{Q}(1,5)$ & $51$ & $9$&$0.5$ \\ 
$\mathcal{Q}(0.8,1)$ & $53$ & $33$ &$0.5$\\ 
$\mathcal{Q}(0.8,2)$ & $54$ & $18$ &$0.5$\\ 
$\mathcal{Q}(0.8,5)$ & $58$ & $9$ &$0.5$\\ 
$\mathcal{Q}(0.5,1)$ & $54$ & $18$ &$0.5$\\ 
$\mathcal{Q}(0.5,2)$ & $54$ & $10$&$0.5$ \\ 
$\mathcal{Q}(0.5,5)$ & $62$ & $6$&$0.5$ \\ \hline
\end{tabular}}
\end{table}


\setcounter{equation}{0}
\section{Proofs}\label{Sect. 5}

Note that Theorem \ref{Theorem 1.1} and Theorem \ref{Theorem 2.1} follow from Theorem 2.4 of \cite{Woodroofe (1977)} immediately, \eqref{1.5} follows from Theorem \ref{Theorem 1.1} and the definition of $\varphi_{\mathcal{M}_0}$, \eqref{2.5} follows from Theorem \ref{Theorem 2.1}, and Theorem \ref{Theorem 3.1} is paraphrased from \citet[6.4.14]{Mukhopadhyay and de Silva (2009)}. In this section, therefore, we shall only prove Theorem \ref{Theorem 3.2} and Theorem \ref{Theorem 4.1}.

\subsection{Proof of Theorem 3.2 and (3.11)}

By the stopping rule defined in \eqref{3.10}, we have the following inequality:
\begin{equation}\label{5.1}
N_{\mathcal{P}(\rho,k)}\le \rho^{-1}(m+kT_{\mathcal{P}(\rho,k)})+1, 
\end{equation}
where 
\begin{equation*}
m+kT_{\mathcal{P}(\rho,k)}<m+k+\rho S_{m+k(T_{\mathcal{P}(\rho,k)}-1)}\sqrt{A/c}.
\end{equation*}
Therefore,
\begin{equation}\label{5.2}
\frac{N_{\mathcal{P}(\rho,k)}}{n^{\ast}}\le \frac{S_{m+k(T_{\mathcal{P}(\rho,k)}-1)}}{\sigma}+\frac{\rho^{-1}(m+k)+1}{n^{\ast}}\overset{P_{\mu,\sigma}}{\longrightarrow }1\text{ as }c\rightarrow 0.  
\end{equation}
On the other hand, we also have the inequality that
\begin{equation}\label{5.3}
N_{\mathcal{P}(\rho,k)}\ge \rho^{-1}(m+kT_{\mathcal{P}(\rho,k)})\ge S_{m+kT_{\mathcal{P}(\rho,k)}}\sqrt{A/c},
\end{equation}
from which we conclude 
\begin{equation}\label{5.4}
\frac{N_{\mathcal{P}(\rho,k)}}{n^{\ast}}\ge \frac{S_{m+kT_{\mathcal{P}(\rho,k)}}}{\sigma}\overset{P_{\mu,\sigma}}{\longrightarrow}1\text{ as } c\rightarrow 0.  
\end{equation}
Combining \eqref{5.2} and \eqref{5.4}, it is clear that
\begin{equation}\label{5.5}
N_{\mathcal{P}(\rho,k)}/n^{\ast}\overset{P_{\mu,\sigma}}{\longrightarrow}1\text{ as }c\rightarrow 0.  
\end{equation}

Note that for sufficiently small $c$, with limit operation given in \eqref{3.11}, we have (w.p.1) 
\begin{equation*}\label{5.6}
\frac{N_{\mathcal{P}(\rho,k)}}{n^{\ast}}\le \sigma^{-1}\sup\nolimits_{n\geq 2}S_{n}+2.  
\end{equation*}
Since it holds that
\begin{equation*}
E_{\mu,\sigma}^{2}[\sup\nolimits_{n\ge 2}S_{n}]\leq E_{\mu,\sigma}[(\sup\nolimits_{n\geq 2}S_{n})^{2}]\leq E_{\mu,\sigma}[\sup\nolimits_{n\ge 2}S_{n}^{2}],
\end{equation*}
and Wiener's (\citeyear{Wiener (1939)}) ergodic theorem leads to $E_{\mu,\sigma}[\sup\nolimits_{n\ge 2}S_{n}^{2}]<\infty$, combined with \eqref{5.5} it follows by the dominated convergence theorem that $$E_{\mu,\sigma}[N_{\mathcal{P}(\rho,k)}/n^{\ast}]\rightarrow 1 \text{ as } c\rightarrow 0.$$ Since $c=A\sigma^{2}/n^{\ast 2}$ from \eqref{3.3}, Theorem \ref{Theorem 3.2} (i) holds under the limit operations \eqref{3.11}.

\vspace{1em}

Recall that $T_{\mathcal{P}(\rho,k)}^{\prime}$ defined in \eqref{3.13} is of the same form with $t_{0}$ from \eqref{1.1}. Then, referring to (1.1) of \cite{Woodroofe (1977)} or Section A.4 of the Appendix in \cite{Mukhopadhyay and de Silva (2009)}, we have as $c\rightarrow 0$, for $m_{0}\geq 2$,
\begin{equation*}\label{5.7}
E_{\mu,\sigma }[T_{\mathcal{P}(\rho,k)}^{\prime}]=k^{-1}\rho n^{\ast}+\frac{1}{2}-\frac{3}{2k}-\frac{1}{2k}\sum_{n=1}^{\infty}n^{-1}E\left[\left\{\chi_{kn}^{2}-3kn\right\}^{+}\right] +o(1).  
\end{equation*}
Therefore, with $T_{\mathcal{P}(\rho,k)}^{\prime}=T_{\mathcal{P}(\rho,k)}+m$ w.p.1 and $m=m_{0}k+1$, we have
\begin{equation}\label{5.8}
E_{\mu,\sigma}[m+kT_{\mathcal{P}(\rho,k)}]=E_{\mu,\sigma }[1+kT_{\mathcal{P}(\rho,k)}^{\prime}]=\rho n^{\ast}+\eta_1(k)+o(1).  
\end{equation}
Putting together \eqref{3.15} and \eqref{5.8}, one obtains \eqref{3.16}. And under the limit operations \eqref{3.11}, Theorem \ref{Theorem 3.2} (ii) follows immediately from \eqref{5.8} and inequalities given in \eqref{5.1} and \eqref{5.3}.

\vspace{1em}

Next, we state the following lemmas to derive the desirable results in Theorem \ref{Theorem 3.2} (iii) and (iv).

\begin{lemma}\label{Lemma 5.1}
For the general sequential MRPE methodology $\mathcal{P}(\rho,k)$ given in \eqref{3.10}, under the limit operations \eqref{3.11}, for any arbitrary $0< \varepsilon <1$, with some $\gamma \ge 2,$
\begin{equation*}\label{4.9}
P_{\mu,\sigma}\left( N_{\mathcal{P}(\rho,k)}\le \varepsilon n^{\ast}\right)=O\left( n^{\ast^{-\frac{\gamma}{2r}}}\right).
\end{equation*}
\end{lemma}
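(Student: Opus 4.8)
The plan is to bound the small-deviation probability $P_{\mu,\sigma}(N_{\mathcal{P}(\rho,k)} \le \varepsilon n^{\ast})$ by tracking back to the stopping rule and using the fact that $N_{\mathcal{P}(\rho,k)}$ being small forces the sample standard deviation $S_j$ to be atypically small at some stage $j$ in a bounded range. First I would note that, by the lower inequality \eqref{5.3}, the event $\{N_{\mathcal{P}(\rho,k)} \le \varepsilon n^{\ast}\}$ implies $S_{m+kT_{\mathcal{P}(\rho,k)}}\sqrt{A/c} \le \varepsilon n^{\ast}$, i.e. $S_{m+kT_{\mathcal{P}(\rho,k)}} \le \varepsilon \sigma$ (using $n^{\ast} = \sigma\sqrt{A/c}$ from \eqref{3.3}). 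Moreover, since the stopping rule is satisfied at $T_{\mathcal{P}(\rho,k)}$ and the sample index $m+kn$ increases by $k$ each step while the right side $\rho^{-1}(m+kn)$ grows linearly, a small terminal index means termination happened early; concretely, $\{N_{\mathcal{P}(\rho,k)} \le \varepsilon n^{\ast}\}$ is contained in the union $\bigcup_{j} \{S_{j} \le \varepsilon' \sigma\}$ over indices $j = m, m+k, m+2k, \dots$ up to roughly $\varepsilon n^{\ast}$, for a suitable $\varepsilon' $ slightly larger than $\varepsilon$ (absorbing the $\rho^{-1}$ and the additive $m+k$ terms, which are negligible as $c\to 0$ by \eqref{3.11}). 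So the task reduces to a union bound over at most $O(n^{\ast})$ events of the form $\{S_j^2 \le \varepsilon'^2 \sigma^2\}$.

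Next I would estimate each $P_{\mu,\sigma}(S_j^2 \le \varepsilon'^2\sigma^2)$. Since $(j-1)S_j^2/\sigma^2 \sim \chi^2_{j-1}$, this is $P(\chi^2_{j-1} \le (j-1)\varepsilon'^2)$, and for $\varepsilon' < 1$ a standard Chernoff/large-deviation bound gives $P(\chi^2_{j-1} \le (j-1)\varepsilon'^2) \le e^{-(j-1)\kappa(\varepsilon')}$ for some rate constant $\kappa(\varepsilon') > 0$ depending only on $\varepsilon'$. The smallest relevant index is $j = m$, so every term in the union is at most $e^{-(m-1)\kappa}$, and there are $O(n^{\ast})$ of them; hence $P_{\mu,\sigma}(N_{\mathcal{P}(\rho,k)} \le \varepsilon n^{\ast}) \le C\, n^{\ast}\, e^{-(m-1)\kappa}$. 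Now invoke the limit operations \eqref{3.11}: since $n^{\ast} = O(m^r)$, we have $m \ge c_1 n^{\ast^{1/r}}$ for some $c_1>0$, so $e^{-(m-1)\kappa} \le e^{-c_2 n^{\ast^{1/r}}}$, which decays faster than any polynomial power of $n^{\ast}$. Thus $n^{\ast} e^{-c_2 n^{\ast^{1/r}}} = O(n^{\ast^{-\gamma/(2r)}})$ for any $\gamma \ge 2$, which is exactly the claimed bound.

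The main obstacle — really the only subtle point — is handling the bookkeeping in the first step cleanly: one must verify that the additive terms $m+k$ and the factor $\rho^{-1}$ appearing in the relation between $N_{\mathcal{P}(\rho,k)}$ and $S_{m+kT_{\mathcal{P}(\rho,k)}}$ can be absorbed into a strictly-less-than-one constant $\varepsilon'$ for all small $c$, using $\limsup (m/n^{\ast}) < \rho$ from \eqref{3.11}, and that the union is genuinely over only $O(n^{\ast})$ indices (so the union-bound loss is polynomial, not exponential). I would also double-check that the uniform Chernoff rate $\kappa(\varepsilon')$ can be chosen independent of $j$ — which it can, since the per-sample rate for $\chi^2_{j-1}$ deviations is the same for every degree of freedom — so that the minimal-index bound $e^{-(m-1)\kappa}$ controls the whole sum. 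Everything else is the routine $\chi^2$ tail estimate and substitution of the growth conditions.
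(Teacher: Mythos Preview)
Your proposal is correct and takes a genuinely different route from the paper. Both arguments begin by reducing $\{N_{\mathcal{P}(\rho,k)}\le\varepsilon n^\ast\}$ to the event that $S_{m+kt}\le\varepsilon\sigma$ for some $t$ in the range $0\le t\le t_u$ with $t_u\approx k^{-1}\rho\varepsilon n^\ast$. From there the paper applies a maximal (Doob-type) inequality to $\{|S_{m+kt}-\sigma|\}_{t\ge0}$, bounding the probability of the running maximum by the $\gamma$-th moment at the initial index: $P\bigl(\max_{0\le t\le t_u}|S_{m+kt}-\sigma|\ge(1-\varepsilon)\sigma\bigr)\le\{(1-\varepsilon)\sigma\}^{-\gamma}E_{\mu,\sigma}|S_m-\sigma|^\gamma=O(m^{-\gamma/2})=O(n^{\ast-\gamma/(2r)})$, citing \cite{Hu and Mukhopadhyay (2019)} for the last step. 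You instead take a plain union bound over the $O(n^\ast)$ indices and a Chernoff bound on each $\chi^2$ lower tail, arriving at $O(n^\ast e^{-(m-1)\kappa})$; since $m\ge c_1 n^{\ast\,1/r}$ under \eqref{3.11}, this decays faster than any polynomial in $n^\ast$. Your route is more elementary---no maximal inequality is needed---and in fact delivers a stronger bound, but it exploits the exact $\chi^2$ distribution of $(j-1)S_j^2/\sigma^2$, whereas the paper's moment argument is distribution-free and produces precisely the polynomial rate stated in the lemma. One minor simplification: no adjusted $\varepsilon'>\varepsilon$ is required, since $N_{\mathcal{P}(\rho,k)}\ge S_{m+kT}\sqrt{A/c}$ already gives $S_{m+kT}\le\varepsilon\sigma$ directly; the factor $\rho^{-1}$ only enters when bounding the range of $T$, not the threshold on $S$.
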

\begin{proof}
Recall $\left\lfloor u\right\rfloor$ denotes the largest integer that is smaller than $u$ and we define:
\begin{equation*}\label{5.10}
t_{u}=\left\lfloor k^{-1}\rho \varepsilon n^{\ast}\right\rfloor +1.
\end{equation*}
It should be obvious that $0\le T_{\mathcal{P}(\rho,k)}\le t_{u}$. Then, the rate at which $P_{\mu,\sigma}\{N_{\mathcal{P}(\rho,k)}\le \varepsilon n^{\ast}\}$ may converge to zero under the limit operations \eqref{3.11} is given by
\begin{equation}\label{5.11}
\begin{split}
& P_{\mu,\sigma}\{N_{\mathcal{P}(\rho,k)}\le \varepsilon n^{\ast }\} \le P_{\mu,\sigma}\left\{ \rho^{-1}(m+kT_{\mathcal{P}(\rho,k)})\le \varepsilon n^{\ast}\right\}\\
\le & P_{\mu,\sigma}\left\{ S_{m+kt}\le \varepsilon \sigma \text{ for some }t\text{ such that }0\le t \le t_{u}\right\} \\
\le & P_{\mu,\sigma}\left\{\max\limits_{0\le t\le t_{u}}\left\vert S_{m+kt}-\sigma \right\vert \ge \left( 1-\varepsilon \right) \sigma \right\}\\
\le & \{\left( 1-\varepsilon \right) \sigma \}^{-\gamma}E_{\mu,\sigma}\left\vert S_{m}-\sigma \right\vert^{\gamma }=O\left( m^{-\frac{\gamma }{2}}\right) =O\left( n^{\ast-\frac{\gamma}{2r}}\right).
\end{split}
\end{equation}
For the justification of the last inequality in \eqref{5.11}, one may see \cite{Hu and Mukhopadhyay (2019)} and other sources for more details.
\end{proof}

\begin{lemma}\label{Lemma 5.2}
For the general sequential MRPE methodology $\mathcal{P}(\rho,k)$ given in \eqref{3.10}, under the limit operations \eqref{3.11},
\begin{equation*}
\begin{split}
(i) & \left( N_{\mathcal{P}(\rho,k)}-n^{\ast}\right)/n^{\ast 1/2}\overset{d}{\rightarrow}N(0,\frac{1}{2}\rho^{-1});\\ 
(ii) & \left( N_{\mathcal{P}(\rho,k)}-n^{\ast}\right)/N_{\mathcal{P}(\rho,k)}^{1/2}\overset{d}{\rightarrow}N(0,\frac{1}{2}\rho^{-1}).
\end{split}
\end{equation*}
\end{lemma}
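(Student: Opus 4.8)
The plan is to establish the asymptotic normality of $N_{\mathcal{P}(\rho,k)}$ by transferring the corresponding result from the purely sequential stopping time $T_{\mathcal{P}(\rho,k)}^{\prime}$, which has the canonical Woodroofe form \eqref{1.1}, via the classical Anscombe-type central limit theorems for randomly stopped sums. First I would recall that $T_{\mathcal{P}(\rho,k)}^{\prime}$ as defined in \eqref{3.13} is precisely of the form $t_0$ in \eqref{1.1} with $\delta=2$, $\theta=1$, $\tau^2=2$, and with $U_i\sim\chi^2_k$ i.i.d.; consequently the optimal fixed sample size governing $T_{\mathcal{P}(\rho,k)}^{\prime}$ is $k^{-1}\rho n^{\ast}$. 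By the standard nonlinear renewal / Anscombe theory for such stopping times (see Woodroofe (1977), Theorem 2.3, or Section A.4 of Mukhopadhyay and de Silva (2009)), one has
\begin{equation*}
\frac{kT_{\mathcal{P}(\rho,k)}^{\prime}-\rho n^{\ast}}{\sqrt{\rho n^{\ast}}}\overset{d}{\rightarrow}N\!\left(0,\;\delta^{-2}\tau^{2}\theta^{-2}k^{-1}\cdot k\right)=N\!\left(0,\tfrac{1}{2}\right),
\end{equation*}
since for the general stopping rule the limiting variance of $(\text{sample size}-\text{optimal})/\sqrt{\text{optimal}}$ equals $\delta^{-2}\theta^{-2}\tau^2_{U}/k^2\cdot$(optimal) scaled appropriately; with $\delta=2$, $\theta=1$, and $\mathrm{Var}(U_1)=2k$ this reduces to $1/2$.

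Next I would pass from $T_{\mathcal{P}(\rho,k)}^{\prime}$ to $N_{\mathcal{P}(\rho,k)}$. Using $T_{\mathcal{P}(\rho,k)}^{\prime}=T_{\mathcal{P}(\rho,k)}+m$ w.p.1 and the definition $N_{\mathcal{P}(\rho,k)}=\lfloor\rho^{-1}(m+kT_{\mathcal{P}(\rho,k)})\rfloor+1$, we may write
\begin{equation*}
N_{\mathcal{P}(\rho,k)}=\rho^{-1}\bigl(kT_{\mathcal{P}(\rho,k)}^{\prime}-k m + m\bigr)+R_c=\rho^{-1}\bigl(kT_{\mathcal{P}(\rho,k)}^{\prime}\bigr)+O_{P}(1),
\end{equation*}
where $|R_c|\le 1$ accounts for the floor operation and the bounded additive constants $m,k$ are fixed. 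Therefore
\begin{equation*}
\frac{N_{\mathcal{P}(\rho,k)}-\rho^{-1}\cdot\rho n^{\ast}}{\sqrt{n^{\ast}}}=\rho^{-1}\cdot\frac{kT_{\mathcal{P}(\rho,k)}^{\prime}-\rho n^{\ast}}{\sqrt{\rho n^{\ast}}}\cdot\sqrt{\frac{\rho n^{\ast}}{n^{\ast}}}\cdot\frac{1}{\rho}+o_{P}(1)=\frac{1}{\sqrt{\rho}}\cdot\frac{kT_{\mathcal{P}(\rho,k)}^{\prime}-\rho n^{\ast}}{\sqrt{\rho n^{\ast}}}+o_{P}(1).
\end{equation*}
Feeding in the limiting $N(0,\tfrac12)$ distribution and Slutsky's theorem yields $(N_{\mathcal{P}(\rho,k)}-n^{\ast})/\sqrt{n^{\ast}}\overset{d}{\rightarrow}N(0,\rho^{-1}\cdot\tfrac12)$, which is part (i). Part (ii) then follows from part (i) together with $N_{\mathcal{P}(\rho,k)}/n^{\ast}\overset{P_{\mu,\sigma}}{\rightarrow}1$ (established already in \eqref{5.5}): writing $(N_{\mathcal{P}(\rho,k)}-n^{\ast})/N_{\mathcal{P}(\rho,k)}^{1/2}=\bigl[(N_{\mathcal{P}(\rho,k)}-n^{\ast})/n^{\ast 1/2}\bigr]\cdot\bigl(n^{\ast}/N_{\mathcal{P}(\rho,k)}\bigr)^{1/2}$ and applying Slutsky again gives the same limit.

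The main obstacle, I expect, is justifying the Anscombe-type CLT for the randomly stopped average in \eqref{3.13} cleanly — that is, confirming that the hypotheses for the nonlinear renewal theorem (uniform continuity in probability of the underlying partial-sum process, the moment and tail conditions on $U_1$, and the slowly-varying correction $l_k(n)$ contributing nothing at the $\sqrt{n^{\ast}}$ scale) are all met under the limit operations \eqref{3.11}. Since $U_1\sim\chi^2_k$ has all positive moments finite and satisfies the required left-tail bound with $\alpha=1/2$, and since $T_{\mathcal{P}(\rho,k)}^{\prime}$ is literally an instance of \eqref{1.1}, this is essentially a citation to Woodroofe (1977, Theorem 2.3) / Lai and Siegmund (1977, 1979); the only genuinely new bookkeeping is tracking the constants $\delta=2$, $\theta=1$, $\tau^2_{U}=2k$ through the variance formula and confirming the $\rho$-scaling in the passage from $T_{\mathcal{P}(\rho,k)}^{\prime}$ to $N_{\mathcal{P}(\rho,k)}$, which the floor-function estimate above handles with an $O_P(1)$ error that is negligible at the $\sqrt{n^{\ast}}$ normalization.
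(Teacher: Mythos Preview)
Your overall strategy is sound and conceptually aligned with the paper's, but there is a bookkeeping slip that, as written, would break the argument. You write $T_{\mathcal{P}(\rho,k)}^{\prime}=T_{\mathcal{P}(\rho,k)}+m$, whereas the correct identity is $T_{\mathcal{P}(\rho,k)}^{\prime}=T_{\mathcal{P}(\rho,k)}+m_{0}$ with $m=m_{0}k+1$; this yields $m+kT_{\mathcal{P}(\rho,k)}=kT_{\mathcal{P}(\rho,k)}^{\prime}+1$, so that the additive error in passing to $N_{\mathcal{P}(\rho,k)}$ really is $O(1)$. With your stated relation one would instead get $m+kT_{\mathcal{P}(\rho,k)}=kT_{\mathcal{P}(\rho,k)}^{\prime}+m(1-k)$, and under the limit operations \eqref{3.11} one has $m\to\infty$, \emph{not} $m$ fixed, so for $k\ge2$ this extra term diverges and cannot be absorbed into an $O_{P}(1)$. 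Once the correct relation is used, your passage from $kT_{\mathcal{P}(\rho,k)}^{\prime}$ to $N_{\mathcal{P}(\rho,k)}$ goes through (your displayed scaling chain also carries a spurious extra factor $1/\rho$, though your final expression $\rho^{-1/2}(kT_{\mathcal{P}(\rho,k)}^{\prime}-\rho n^{\ast})/\sqrt{\rho n^{\ast}}$ is right).

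On method, the paper reaches the same intermediate CLT, namely $(m+kT_{\mathcal{P}(\rho,k)}-\rho n^{\ast})/\sqrt{\rho n^{\ast}}\overset{d}{\to}N(0,\tfrac12)$, by a more hands-on route: rather than invoking Woodroofe's stopping-time CLT as a black box, it sandwiches this quantity between the two bounds $\rho\sqrt{A/c}\,(S_{m+kT_{\mathcal{P}(\rho,k)}}-\sigma)/\sqrt{\rho n^{\ast}}$ and the analogous expression with $S_{m+k(T_{\mathcal{P}(\rho,k)}-1)}$, and then applies the ordinary CLT for $S_{\lfloor\rho n^{\ast}\rfloor+1}$ together with Anscombe's uniform-continuity-in-probability argument. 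Your citation to Woodroofe/Lai--Siegmund is legitimate and more economical, but it does require checking that \eqref{3.13} fits those hypotheses exactly; the paper's direct sandwich avoids that verification at the cost of a few extra lines. Both approaches finish part~(i) via the same floor-function inequalities (your $O_{P}(1)$ step is precisely the paper's \eqref{5.17}), and both obtain part~(ii) from part~(i) and $N_{\mathcal{P}(\rho,k)}/n^{\ast}\overset{P}{\to}1$ via Slutsky.
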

\begin{proof}
First, we prove
\begin{equation}\label{5.12}
\frac{m+kT_{\mathcal{P}(\rho,k)}-\rho n^{\ast}}{\sqrt{\rho n^{\ast}}}\overset{d}{\rightarrow}N(0,\frac{1}{2}) \text{ as } c\rightarrow 0
\end{equation}
based on the inequalities that
\begin{equation*}\label{5.13}
\frac{\rho S_{m+kT_{\mathcal{P}(\rho,k)}}\sqrt{A/c}-\rho n^{\ast}}{\sqrt{\rho n^{\ast}}}\leq \frac{m+kT_{\mathcal{P}(\rho,k)}-\rho n^{\ast }}{\sqrt{\rho n^{\ast}}}\leq \frac{\rho S_{m+k(T_{\mathcal{P}(\rho,k)}-1)}\sqrt{A/c}-\rho n^{\ast}+m+k}{\sqrt{\rho n^{\ast}}}.
\end{equation*}
It is not hard to see that
\begin{equation*}\label{5.14}
\left( \left\lfloor \rho n^{\ast}\right\rfloor +1\right)
^{1/2}(S_{\left\lfloor \rho n^{\ast }\right\rfloor +1}/\sigma-1)\overset{d}{\rightarrow}N(0,\frac{1}{2}) \text{ as } c\rightarrow 0,
\end{equation*}
and the sequence $\{S_{\left\lfloor \rho n^{\ast }\right\rfloor +1}\}$ is {\it uniformly continuous in probability} (\citeauthor{Anscombe (1952)},\citeyear{Anscombe (1952)}). From previous results, we can easily show
\begin{equation*}\label{5.15}
\frac{m+kT_{\mathcal{P}(\rho,k)}}{\left\lfloor \rho n^{\ast }\right\rfloor+1}\overset{P_{\mu,\sigma}}{\longrightarrow}1 \text{ as } c\rightarrow 0.
\end{equation*}

Now Anscombe's (\citeyear{Anscombe (1952)}) random central limit theorem leads to 
\begin{equation*}\label{5.16}
\begin{split}
& \frac{\rho S_{m+kT_{\mathcal{P}(\rho,k)}}\sqrt{A/c}-\rho n^{\ast}}{\sqrt{\rho n^{\ast}}}\overset{d}{\rightarrow}N(0,\frac{1}{2}) \text{ and} \\ 
& \frac{\rho S_{m+k(T_{\mathcal{P}(\rho,k)}-1)}\sqrt{A/c}-\rho n^{\ast }+m+k}{\sqrt{\rho n^{\ast }}}\overset{d}{\rightarrow }N(0,\frac{1}{2}) \text{ as } c\rightarrow 0.
\end{split}
\end{equation*}
Hence, \eqref{5.12} holds. Next, with the inequalities that
\begin{equation}\label{5.17}
\rho ^{-\frac{1}{2}}\frac{m+kT_{\mathcal{P}(\rho,k)}-\rho n^{\ast }}{\sqrt{\rho n^{\ast}}}\le \frac{N_{\mathcal{P}(\rho ,k)}-n^{\ast }}{\sqrt{n^{\ast}}}\le \rho^{-\frac{1}{2}}\frac{m+kT_{\mathcal{P}(\rho,k)}-\rho n^{\ast}}{\sqrt{\rho n^{\ast}}}+\frac{1}{\sqrt{n^{\ast}}},
\end{equation}
Lemma \ref{Lemma 5.2} (i) follows immediately, and Slutsky's theorem provides Lemma \ref{Lemma 5.2} (ii) under the limit operations \eqref{3.11}. 
\end{proof}

\begin{lemma}\label{Lemma 5.3}
For the general sequential MRPE methodology $\mathcal{P}(\rho,k)$ given in \eqref{3.10}, under the limit operations \eqref{3.11},
\begin{equation*}
\left( N_{\mathcal{P}(\rho,k)}-n^{\ast}\right)^{2}/n^{\ast} \textit{ is uniformly integrable}.
\end{equation*}
\end{lemma}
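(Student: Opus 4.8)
The plan is to establish uniform integrability of $(N_{\mathcal{P}(\rho,k)}-n^{\ast})^{2}/n^{\ast}$ by combining a truncation argument on the event $\{N_{\mathcal{P}(\rho,k)} \le \varepsilon n^{\ast}\}$ (controlled by Lemma \ref{Lemma 5.1}) with a classical moment-bound argument on the complementary event where $N_{\mathcal{P}(\rho,k)}$ stays bounded away from $0$. First I would recall that by Lemma \ref{Lemma 5.2}(i) we already have $(N_{\mathcal{P}(\rho,k)}-n^{\ast})^{2}/n^{\ast}\overset{d}{\to}\tfrac{1}{2}\rho^{-1}\chi_1^2$, so it suffices to prove $L^1$-boundedness of this sequence together with uniform integrability; equivalently, by the de la Vall\'ee Poussin / standard criterion it is enough to show $\sup_{c}E_{\mu,\sigma}[(N_{\mathcal{P}(\rho,k)}-n^{\ast})^{2}/n^{\ast}] < \infty$ and that the tails are uniformly small. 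In fact the cleanest route is to bound a slightly higher moment: I would aim to show $\sup_c E_{\mu,\sigma}\big[\big((N_{\mathcal{P}(\rho,k)}-n^{\ast})^{2}/n^{\ast}\big)^{1+\epsilon_0}\big]<\infty$ for some small $\epsilon_0>0$, which immediately yields uniform integrability.

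The key steps, in order, are as follows. Fix an $\varepsilon\in(0,1)$ small (say $\varepsilon = 1/2$). On the event $\{N_{\mathcal{P}(\rho,k)}>\varepsilon n^{\ast}\}$, write $N \equiv N_{\mathcal{P}(\rho,k)}$ and use the basic inequalities \eqref{5.1} and \eqref{5.3}: from \eqref{5.3}, $N \ge S_{m+kT_{\mathcal{P}(\rho,k)}}\sqrt{A/c}$, and from \eqref{5.1} together with the displayed bound on $m+kT_{\mathcal{P}(\rho,k)}$, $N \le S_{m+k(T_{\mathcal{P}(\rho,k)}-1)}\sqrt{A/c} + \rho^{-1}(m+k)+1$. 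Hence $|N-n^{\ast}|$ is controlled (up to additive $O(m)$ terms, which are $o(n^{\ast 1/2})$ under \eqref{3.11}) by $n^{\ast}|S_{M}/\sigma - 1|$ where $M$ is a random index that on $\{N>\varepsilon n^{\ast}\}$ satisfies $M \ge \varepsilon n^{\ast} - O(1) =: n_\varepsilon$. Therefore on this event
\begin{equation*}
\frac{(N-n^{\ast})^2}{n^{\ast}} \le C\,n^{\ast}\max_{n\ge n_\varepsilon}\big(S_n/\sigma-1\big)^2 + o(1),
\end{equation*}
and I would bound $E_{\mu,\sigma}\big[n^{\ast}\max_{n\ge n_\varepsilon}(S_n/\sigma-1)^2\big]$ (and a $(1+\epsilon_0)$ power thereof) using a maximal inequality for the reverse submartingale $\{(S_n/\sigma-1)^2\}$ or the Marcinkiewicz–Zygmund / Hájek–Rényi type bounds, exactly as in the standard Starr–Woodroofe / Woodroofe (1977) treatment; since $n_\varepsilon \asymp n^{\ast}$, the factor $n^{\ast}$ is absorbed and the bound is uniform in $c$. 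On the complementary event $\{N \le \varepsilon n^{\ast}\}$, I would crudely bound $(N-n^{\ast})^2/n^{\ast} \le n^{\ast}$ (since $0\le N\le n^{\ast}$ there, up to constants), so its contribution to the $(1+\epsilon_0)$-th moment is at most $n^{\ast 1+\epsilon_0} P_{\mu,\sigma}(N\le\varepsilon n^{\ast})$, and Lemma \ref{Lemma 5.1} gives this probability as $O(n^{\ast -\gamma/(2r)})$ with $\gamma$ chosen large enough (recall $\gamma\ge2$ is free) that $\gamma/(2r) > 1+\epsilon_0$, so this term $\to 0$; in particular it is uniformly bounded.

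The main obstacle I anticipate is the first step: getting a genuinely uniform-in-$c$ bound on $E_{\mu,\sigma}\big[n^{\ast}\max_{n\ge n_\varepsilon}(S_n/\sigma-1)^{2(1+\epsilon_0)}\big]$. The issue is that the random stopping index $M$ can be as small as order $\varepsilon n^{\ast}$, so one cannot simply invoke a fixed-index CLT moment bound; one needs a maximal/uniform moment inequality over the whole tail $n\ge n_\varepsilon$. The resolution is that $\{(n-1)S_n^2/\sigma^2\}$ is (essentially) a sum of i.i.d. mean-one terms, so $S_n^2/\sigma^2 - 1$ behaves like $n^{-1/2}$ in $L^{2+}$, and by Doob's maximal inequality applied to the reverse martingale $\{S_n^2 : n\ge n_\varepsilon\}$ (or Wiener's ergodic theorem as invoked already before Lemma \ref{Lemma 5.1}, which guarantees $E_{\mu,\sigma}[\sup_{n\ge2}S_n^{2(1+\epsilon_0)}]<\infty$ for $\epsilon_0$ small), one obtains $E_{\mu,\sigma}[\max_{n\ge n_\varepsilon}(S_n/\sigma-1)^{2(1+\epsilon_0)}] = O(n_\varepsilon^{-(1+\epsilon_0)}) = O(n^{\ast -(1+\epsilon_0)})$, so multiplying by $n^{\ast}$ (or $n^{\ast 1+\epsilon_0}$ after taking the power) leaves a uniformly bounded quantity. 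Assembling the two events then gives $\sup_c E_{\mu,\sigma}\big[((N-n^{\ast})^2/n^{\ast})^{1+\epsilon_0}\big] < \infty$, which yields the claimed uniform integrability; I would remark that this is the same mechanism underlying \citet[Lemma 2.3 and Theorem 2.4]{Woodroofe (1977)} and cite that for the routine moment estimates.
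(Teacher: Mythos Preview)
Your approach is essentially correct and follows the same mechanism as the paper's, though the paper takes a shortcut: it first invokes \citet[Theorem 3.4]{Hu and Mukhopadhyay (2019)} to obtain uniform integrability of $(\rho n^{\ast})^{-1}(m+kT_{\mathcal{P}(\rho,k)}-\rho n^{\ast})^{2}$, and then transfers this to $(N_{\mathcal{P}(\rho,k)}-n^{\ast})^{2}/n^{\ast}$ via the sandwich inequalities \eqref{5.17}. What you write out explicitly---the truncation on $\{N\le\varepsilon n^{\ast}\}$ handled by Lemma \ref{Lemma 5.1}, and the maximal/reverse-submartingale moment bound for $\max_{n\ge n_{\varepsilon}}(S_{n}/\sigma-1)^{2(1+\epsilon_{0})}$ on the complement---is precisely the content of that cited result specialized to the present stopping rule. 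So the two routes coincide in substance; yours is a direct proof, theirs a reduction to the purely sequential piece plus a citation.

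One small slip to fix: you say the additive remainder in the bound for $|N-n^{\ast}|$ is ``$O(m)$, which is $o(n^{\ast 1/2})$ under \eqref{3.11}.'' Under \eqref{3.11} one has $m\asymp n^{\ast 1/r}$, so $m=o(n^{\ast 1/2})$ only when $r>2$, which is not assumed. Fortunately the remainder is actually $O(1)$, not $O(m)$: on $\{N>\varepsilon n^{\ast}\}$ and for small $c$ one has $T\ge 1$ (since $\{T=0\}\subset\{N\le \rho^{-1}m+1\}\subset\{N\le\varepsilon n^{\ast}\}$ because $m/n^{\ast}\to 0$), and then the stopping-rule inequalities give $N-n^{\ast}\le n^{\ast}(S_{m+k(T-1)}/\sigma-1)+\rho^{-1}k+1$ and $N-n^{\ast}\ge n^{\ast}(S_{m+kT}/\sigma-1)$. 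With this correction your argument goes through unchanged.
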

\begin{proof}
In the light of \citet[Theorem 3.4]{Hu and Mukhopadhyay (2019)}, we can prove that $(\rho n^{\ast })^{-1}\left( m+kT_{\mathcal{P}(\rho,k)}-\rho n^{\ast}\right)^{2}$ is uniformly integrable for sufficient small $c\le c_{0}$ by choosing some $c_{0}(>0)$ appropriately. Therefore, under the limit operations \eqref{3.11}, we have Lemma \ref{Lemma 5.3} by applying the inequalities given in \eqref{5.17}.
\end{proof} 

\vspace{1em}

Now, Theorem \ref{Theorem 3.2} (iii) and (iv) follow from Lemmas \ref{Lemma 5.1}-\ref{Lemma 5.3}. Alternatively, appealing to nonlinear renewal theory, we can prove the same desired results in the spirit of \cite{Woodroofe (1977)}, as well. Many details are left out for brevity.

\subsection{Proof of Theorem 4.1 and (4.7)}

In the same fashion as we proved Theorem \ref{Theorem 3.2} (i), we have
\begin{equation}\label{5.18}
\frac{V_{m+kT}}{\sigma} \le \frac{N_{\mathcal{Q}(\rho,k)}}{n^*} \le \frac{V_{m+k(T_{\mathcal{Q}(\rho,k)}-1)}}{\sigma}+\frac{\rho^{-1}(m+k)+1}{n^*}.
\end{equation}
As $b \to 0$, the two bounds of the inequalities \eqref{5.18} both tend to 1 in probability, so $N_{\mathcal{Q}(\rho,k)}/n^* \overset{P_{\mu,\sigma}}{\longrightarrow} 1$. For sufficiently small $b$, with the limit operation \eqref{4.5}, we have (w.p.1) 
\begin{equation*}\label{5.19}
\frac{N_{\mathcal{Q}(\rho,k)}}{n^*} \le \sigma^{-1}\sup\nolimits_{n\ge2}V_n+2,
\end{equation*}
where $2(n-1)V_n/\sigma \sim \chi^2_{2n-2}$. Similarly, $E_{\mu,\sigma}[\sup\nolimits_{n\ge 2}V_{n}]<\infty$ follows from Wiener's (\citeyear{Wiener (1939)}) ergodic theorem so that
\begin{equation*}\label{5.20}
E_{\mu,\sigma}[{N_{\mathcal{Q}(\rho,k)}}/{n^*}] \to 1 \text{ as } b \to 0.
\end{equation*}
The proof of Theorem \ref{Theorem 4.1} (i) is complete. 

\vspace{1em}

Then, recall that $T_{\mathcal{Q}(\rho,k)}^{\prime}$ defined in \eqref{4.6} is of the same form with $t_{0}$ from \eqref{1.1}. Then, referring to (1.1) of \cite{Woodroofe (1977)} or Section A.4 of the Appendix in \cite{Mukhopadhyay and de Silva (2009)}, we have as $b\rightarrow 0$, for $m_{0}\geq 2$,
\begin{equation*}\label{5.21}
E_{\mu,\sigma }[T_{\mathcal{Q}(\rho,k)}^{\prime}]=k^{-1}\rho n^{\ast}+\frac{1}{2}-\frac{3}{2k}-\frac{1}{2k}\sum_{n=1}^{\infty}n^{-1}E\left[\left\{\chi_{2kn}^{2}-4kn\right\}^{+}\right] +o(1).  
\end{equation*}
Therefore, with $T_{\mathcal{Q}(\rho,k)}^{\prime}=T_{\mathcal{Q}(\rho,k)}+m$ w.p.1 and $m=m_{0}k+1$, we have
\begin{equation}\label{5.22}
E_{\mu,\sigma}[m+kT_{\mathcal{Q}(\rho,k)}]=E_{\mu,\sigma }[1+kT_{\mathcal{Q}(\rho,k)}^{\prime}]=\rho n^{\ast}+\eta_2(k)+o(1).  
\end{equation}
Putting together \eqref{4.7} and \eqref{5.22}, one obtains \eqref{4.8}. And under the limit operations \eqref{4.5}, Theorem \ref{Theorem 4.1} (ii) follows immediately from \eqref{5.22}.

\vspace{1em}

To evaluate the asymptotic variance in Theorem \ref{Theorem 4.1} (iii), we utilize the law of total variance and obtain
\begin{equation}\label{5.23}
\begin{split}
V_{\mu,\sigma}[Y_{N_{\mathcal{Q}(\rho,k)}:1}] &= E_{\mu,\sigma}[V(Y_{N_{\mathcal{Q}(\rho,k)}:1}|N_{\mathcal{Q}(\rho,k)})]+V_{\mu,\sigma}[E(Y_{N_{\mathcal{Q}(\rho,k)}:1}|N_{\mathcal{Q}(\rho,k)})]\\
& = \sum_{n=m}^{\infty}V(Y_{N_{\mathcal{Q}(\rho,k)}:1}|N_{\mathcal{Q}(\rho,k)}=n)P_{\mu,\sigma}(N_{\mathcal{Q}(\rho,k)}=n)+V_{\mu,\sigma}\left[\frac{\sigma^2}{N_{\mathcal{Q}(\rho,k)}}+\mu\right]\\
& = \sum_{n=m}^{\infty}\frac{\sigma^2}{n^2}P_{\mu,\sigma}(N_{\mathcal{Q}(\rho,k)}=n)+E_{\mu,\sigma}\left[\frac{\sigma^2}{N_{\mathcal{Q}(\rho,k)}^2}\right]-E_{\mu,\sigma}^2\left[\frac{\sigma}{N_{\mathcal{Q}(\rho,k)}}\right]\\
& = 2E_{\mu,\sigma}\left[\frac{\sigma^2}{N_{\mathcal{Q}(\rho,k)}^2}\right]-E_{\mu,\sigma}^2\left[\frac{\sigma}{N_{\mathcal{Q}(\rho,k)}}\right],
\end{split}
\end{equation}
since the event $\{N_{\mathcal{Q}(\rho,k)}=n\}$ depends on $V_n$ alone and is therefore independent of $Y_{n:1}$. 

Applying the Taylor theorem to expand $N_{\mathcal{Q}(\rho,k)}^{-j},j\ge1$ around $n^*$, we have
\begin{equation}\label{5.24}
N_{\mathcal{Q}(\rho,k)}^{-j} = n^{*-j} - j\lambda^{-j-1}(N_{\mathcal{Q}(\rho,k)}-n^*), 
\end{equation} 
where $\lambda$ is a random variable lying between $N_{\mathcal{Q}(\rho,k)}$ and $n^*$ Combining \eqref{5.23}, \eqref{5.24}, \eqref{4.3}, and Theorem \ref{Theorem 4.1} (ii) yields
\begin{equation*}
V_{\mu,\sigma}[Y_{N_{\mathcal{Q}(\rho,k)}:1}] = b^2 + O(b^3) = b^2+o(b^2),
\end{equation*} 
which completes the proof.


\setcounter{equation}{0}
\section{Concluding Remarks}\label{Sect. 6}

In this paper, we propose a broader and more general sequential sampling scheme with the motivation to save sampling operations while retaining efficiency. Following the idea of drawing multiple observations at-a-time sequentially to determine a preliminary sample, and then gathering the rest observations all in one batch, we demonstrate the MRPE and BVPE problems under the new sampling scheme as possible illustrations. Furthermore, the new sequential sampling scheme can be applied to deal with other statistical inference problems, including but not limited to: sequential analogues of Behrens-Fisher problems \citep[see e.g.][]{Robbins et al. (1967)}, fixed-width confidence intervals \citep[see e.g.][]{Hall (1983)}, ranking and selection \citep[see e.g.][]{Mukhopadhyay and Solanky (1994)}, bounded-risk point estimation \citep[see e.g.][]{Mukhopadhyay and Bapat (2018)}, treatment means comparison \citep[see e.g.][]{Mukhopadhyay et al (2022)}, etc.


Due to the appealing properties of our newly developed methodology and the reality of saving sampling operations substantially, it will be of great interest for more investigations on the recent problems that researchers have been working on. The list will keep going for a while, and we just list a few here to demonstrate the possible directions: (i) \cite{Schmegner and Baron (2004)} proposed a \textit{sequentially planned probability ratio test} (SPPRT) as a sequentially planned extension of the famous Wald's \textit{sequential probability ratio test} (SPRT); and (ii) \cite{Mukhopadhyay and Zhuang (2019)} worked on the two sample mean comparisons of normal distributions with unknown and unequal variances, where they developed both purely sequential and two-stage methodologies. Our sequential sampling design introduced in the paper can be directly applied to their problem settings, and is expected to save sampling operations significantly.


\end{document}